\documentclass{article} 
\usepackage{lmodern} 

\usepackage{macros}
\usepackage{arxiv}

\newtoggle{arxiv}
\toggletrue{arxiv}

\usepackage{csquotes} 
\usepackage{url}            
\usepackage{graphicx}
\usepackage[numbers,sort]{natbib}
\usepackage{subfiles}
\usepackage{tabularx}
\usepackage{tcolorbox}
\usepackage{etoolbox}
\usepackage{multirow}
\usepackage{enumitem}
\usepackage{amssymb}
\usepackage{amsmath}
\usepackage{amsthm}
\usepackage{xspace}
\usepackage{float}
\usepackage{wrapfig}
\usepackage{booktabs}       
\usepackage{amsfonts}       
\usepackage{nicefrac}       
\usepackage{microtype}      
\usepackage{color, soul}
\usepackage{pifont}
\usepackage{nth}
\usepackage{fontawesome5} 
\usepackage{hyperref} 
\usepackage[framemethod=TikZ]{mdframed}

\usepackage{listings}
\definecolor{codegreen}{rgb}{0,0.6,0}
\definecolor{codegray}{rgb}{0.5,0.5,0.5}
\definecolor{codepurple}{rgb}{0.58,0,0.82}
\definecolor{backcolour}{rgb}{1,1,1}

\usepackage{listings}  
\usepackage{algorithm}
\usepackage[absolute,overlay]{textpos}
\algrenewcommand\algorithmicrequire{\textbf{Input:}}
\algrenewcommand\algorithmicensure{\textbf{Output:}}

\newcommand{\addlogo}{
  \begin{textblock*}{4cm}(0cm, 3cm) 
    \includegraphics[width=\linewidth]{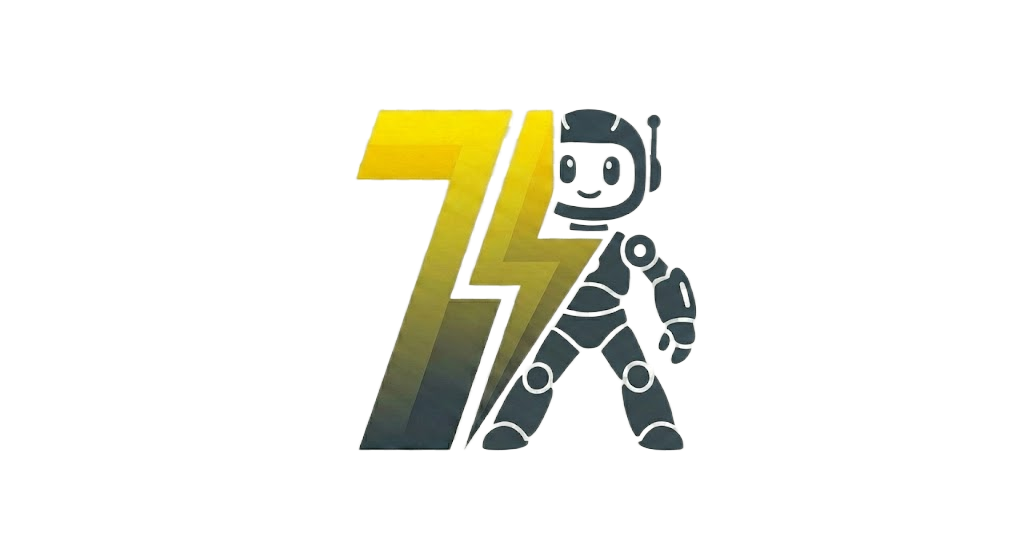}
  \end{textblock*}
}

\lstdefinestyle{req}{
  basicstyle=\ttfamily\footnotesize,
  columns=fullflexible,
  breaklines=true,
  showstringspaces=false,
  numbers=left,
  numberstyle=\tiny,
  numbersep=6pt,
  xleftmargin=1.2em,
  escapeinside={(*@}{@*)}
}

\iftoggle{arxiv}{
    \setlength{\textwidth}{6.5in}
    \setlength{\textheight}{9in}
    \setlength{\oddsidemargin}{0in}
    \setlength{\evensidemargin}{0in}
    \setlength{\topmargin}{-0.5in}
    \newlength{\defbaselineskip}
    \setlength{\defbaselineskip}{\baselineskip}
    \setlength{\marginparwidth}{0.8in}
}

\newcolumntype{Y}{>{\hsize=.7\hsize}X}
\newcolumntype{Z}{>{\hsize=1.3\hsize}X}

\makeatletter
\def\@copyrightspace{\relax}
\makeatother

\makeatletter
\def\@myauthornotes{}
\def\myauthornote#1{%
  \if@ACM@anonymous\else
    \g@addto@macro\addresses{}%
    \g@addto@macro\@myauthornotes{%
      \stepcounter{footnote}\footnotetext{#1}}%
  \fi}
\makeatother

\iftoggle{arxiv}{
    \title{ThunderAgent: A Simple, Fast and Program-Aware Agentic Inference System}

    \makeatletter
    \def\@maketitle{
        \newpage
        \null
        \vskip 2em
        \begin{center}
            \begin{minipage}[c]{0.12\textwidth}
                \includegraphics[width=\linewidth]{Figures/thunderagent.png}
            \end{minipage}
            \begin{minipage}[c]{0.8\textwidth}
                \centering 
                {\LARGE\bfseries \@title \par} 
            \end{minipage}
        \end{center}
        \par
        \vskip 1.5em 
    }
    \makeatother
}
\makeatother
    \usepackage{authblk}
    
    \author[1]{Hao Kang$^{*}$}
    \author[2]{Ziyang Li$^{*}$}
    \author[3]{Weili Xu$^{*}$}
    \author[4]{Xinyu Yang$^{*}$}
    \author[3]{Yinfang Chen}
    \author[5]{Junxiong Wang}
    \author[4]{Beidi Chen}
    \author[1]{Tushar Krishna}
    \author[5]{Chenfeng Xu}
    \author[5]{Simran Arora}
    
    \affil[1]{Georgia Institute of Technology}
    \affil[2]{Individual Researcher}
    \affil[3]{University of Illinois Urbana-Champaign}
    \affil[4]{Carnegie Mellon University}
    \affil[5]{Together AI}
    
    \affil[ ]{
        \par\vspace{0.3em}
        \small
        \href{https://github.com/ThunderAgent-org/ThunderAgent/}{\faGithub\ Code} \hspace{1.5cm}
        \href{https://thunderagent.ai/}{\faRss\ Blog} \hspace{1.5cm}
        \href{https://deepwiki.com/ThunderAgent-org/ThunderAgent}{\faWikipediaW\ Wiki}
    }

    \date{}

\definecolor{dkgreen}{rgb}{0,0.6,0}
\definecolor{gray}{rgb}{0.5,0.5,0.5}
\definecolor{light-gray}{gray}{0.95}
\definecolor{mauve}{rgb}{0.58,0,0.82}
\definecolor{backcolour}{rgb}{0.95,0.95,0.92}

\usepackage{graphicx}
\usepackage{wrapfig,lipsum,booktabs}
\usepackage{threeparttable}
\usepackage{makecell}
\usepackage{hyperref}
\usepackage{url}
\usepackage{algorithm}
\usepackage{algpseudocode}
\usepackage{hyperref}
\usepackage{cleveref}
\usepackage{changepage}
\usepackage{booktabs}
\usepackage{subcaption}

\usepackage{epstopdf}
\usepackage{graphicx}
\usepackage{textcomp}       
\usepackage{scalerel}       

\newcommand{\ouralg}{\textsc{ThunderAgent}\xspace}

\usepackage{listings}
\usepackage{color}
\definecolor{dkgreen}{rgb}{0,0.6,0}
\definecolor{gray}{rgb}{0.5,0.5,0.5}
\definecolor{mauve}{rgb}{0.58,0,0.82}
\lstdefinelanguage{CUDACPP}{
  language=C++,
  morekeywords={__global__, __host__, __device__, __shared__, blockIdx, blockDim, threadIdx, gridDim},
  morecomment=[l][\color{magenta}]{\#},
}
\lstdefinestyle{pythonstyle}{
  language=Python,
  frame=tb,
  aboveskip=3mm,
  belowskip=3mm,
  showstringspaces=false,
  columns=flexible,
  basicstyle={\ttfamily\footnotesize},
  numbers=none,
  numberstyle=\footnotesize\color{gray},
  keywordstyle=\color[rgb]{0.13,0.29,0.53},
  commentstyle=\color[rgb]{0.13,0.55,0.13},
  stringstyle=\color[rgb]{0.31,0.60,0.02},
  breaklines=true,
  breakatwhitespace=true,
  tabsize=4
}

\usepackage{amsmath,amsfonts,bm}

\def\eqref#1{equation~\ref{#1}}

\def\1{\bm{1}}

\DeclareMathAlphabet{\mathsfit}{\encodingdefault}{\sfdefault}{m}{sl}
\SetMathAlphabet{\mathsfit}{bold}{\encodingdefault}{\sfdefault}{bx}{n}

\usepackage{enumitem}

\newcommand{\ShowNotes}{}

\ifdefined\ShowNotes
  \newcommand{\colornote}[3]{{\color{#1}\bf{#2 #3}\normalfont}}
\else
  \newcommand{\colornote}[3]{}
\fi

\definecolor{darkred}{rgb}{0.7,0.1,0.1}
\definecolor{darkgreen}{rgb}{0.1,0.5,0.1}
\definecolor{cyan}{rgb}{0.7,0.0,0.7}
\definecolor{dblue}{rgb}{0.2,0.2,0.8}
\definecolor{maroon}{rgb}{0.76,.13,.28}
\definecolor{burntorange}{rgb}{0.81,.33,0}
\definecolor{royalpurple}{rgb}{0.47,.31,0.66}

\ifdefined\ShowNotes
  
\else
  
\fi

\begin{document}
\addlogo

\maketitle

\begingroup
  \renewcommand\thefootnote{} 
  \footnotetext{ $^{*}$Equal contribution. Correspond to \href{hkang342@gatech.edu}{hkang342@gatech.edu}}
\endgroup

\begin{abstract}
    Large language models (LLMs) are now used to power complex multi-turn agentic workflows. Existing systems run agentic inference by loosely assembling isolated components: an LLM inference engine (e.g., vLLM) and a tool orchestrator (e.g., Kubernetes). Although agentic workflows involve multiple LLM and tool requests, these systems schedule and allocate resources separately on a per-request basis, without end-to-end knowledge of the workflow. This leads to sub-optimal management of KV cache and tool execution environments. To address the challenges, we propose \ouralg, a fast, simple, and program-aware agentic inference system. We first abstract agentic workflows as \textit{LLM Programs}, enabling a unified view of heterogeneous resources, including KV caches, system states, and external tool assets such as disk memory and network ports. Built upon this abstraction, \ouralg introduces a program-aware scheduler and a tool resource manager designed to maximize KV cache hit rates, mitigate memory imbalances, and enable asynchronous environment preparation. Evaluations across coding, routing, and scientific discovery agents demonstrate that \ouralg achieves \textbf{1.5-3.6$\times$} throughput improvements in serving, \textbf{1.8-3.9$\times$} in RL rollout, and up to \textbf{4.2$\times$} disk memory savings compared to state-of-the-art inference systems. To facilitate reproducibility and support future  development, we open-source the system implementations of \ouralg at: \url{https://github.com/ThunderAgent-org/ThunderAgent}. 
\end{abstract}

\section{Introduction}
\label{sec:intro}
\begin{figure*}[t!]  
    \centering
    \begin{subfigure}{0.32\textwidth}
        \centering
        \includegraphics[width=\textwidth]{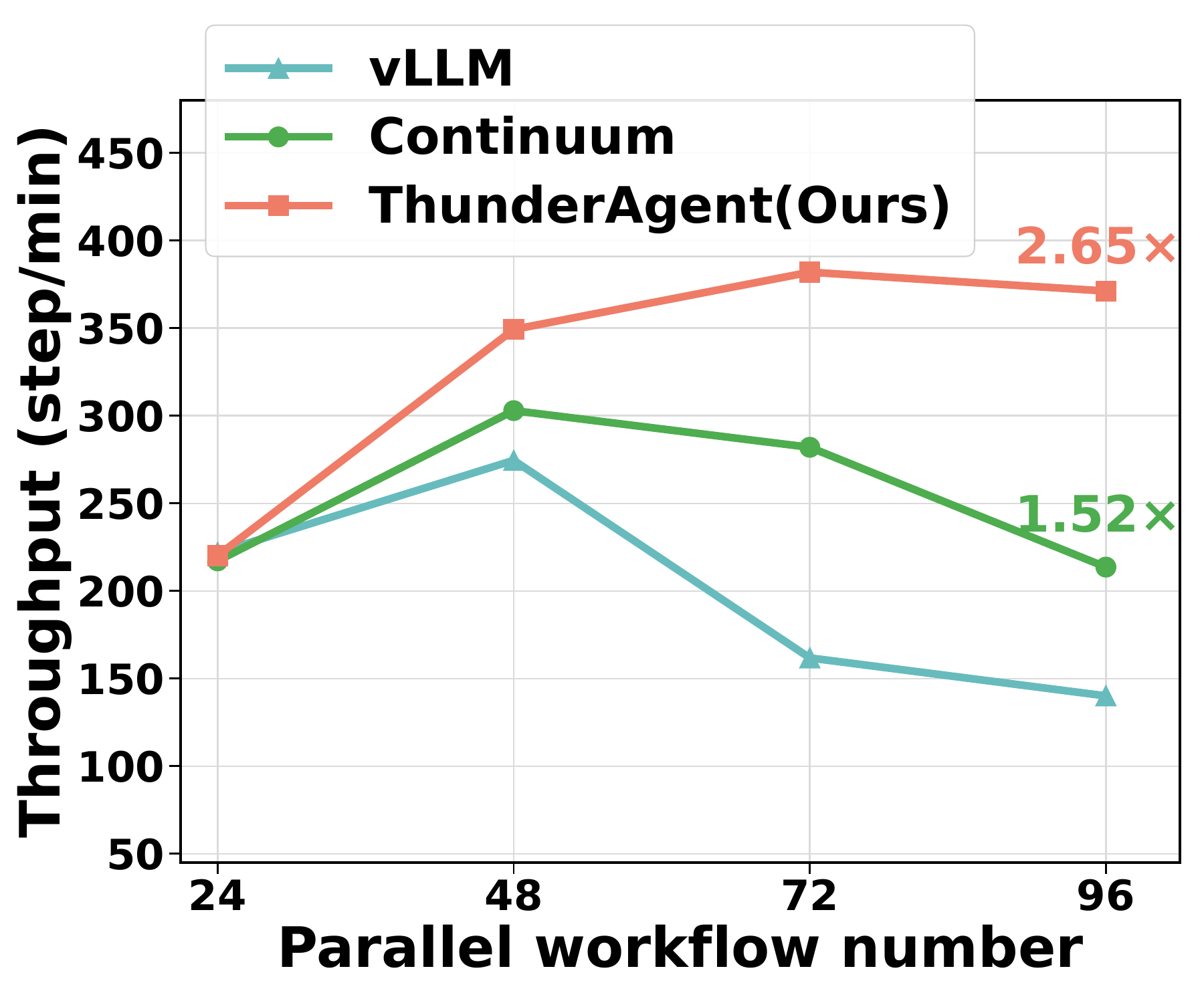}  
        \caption{Throughput degradation}
        \label{fig:throughputs_vs_bs}
    \end{subfigure}
    \begin{subfigure}{0.32\textwidth}
        \centering
        \includegraphics[width=\textwidth]{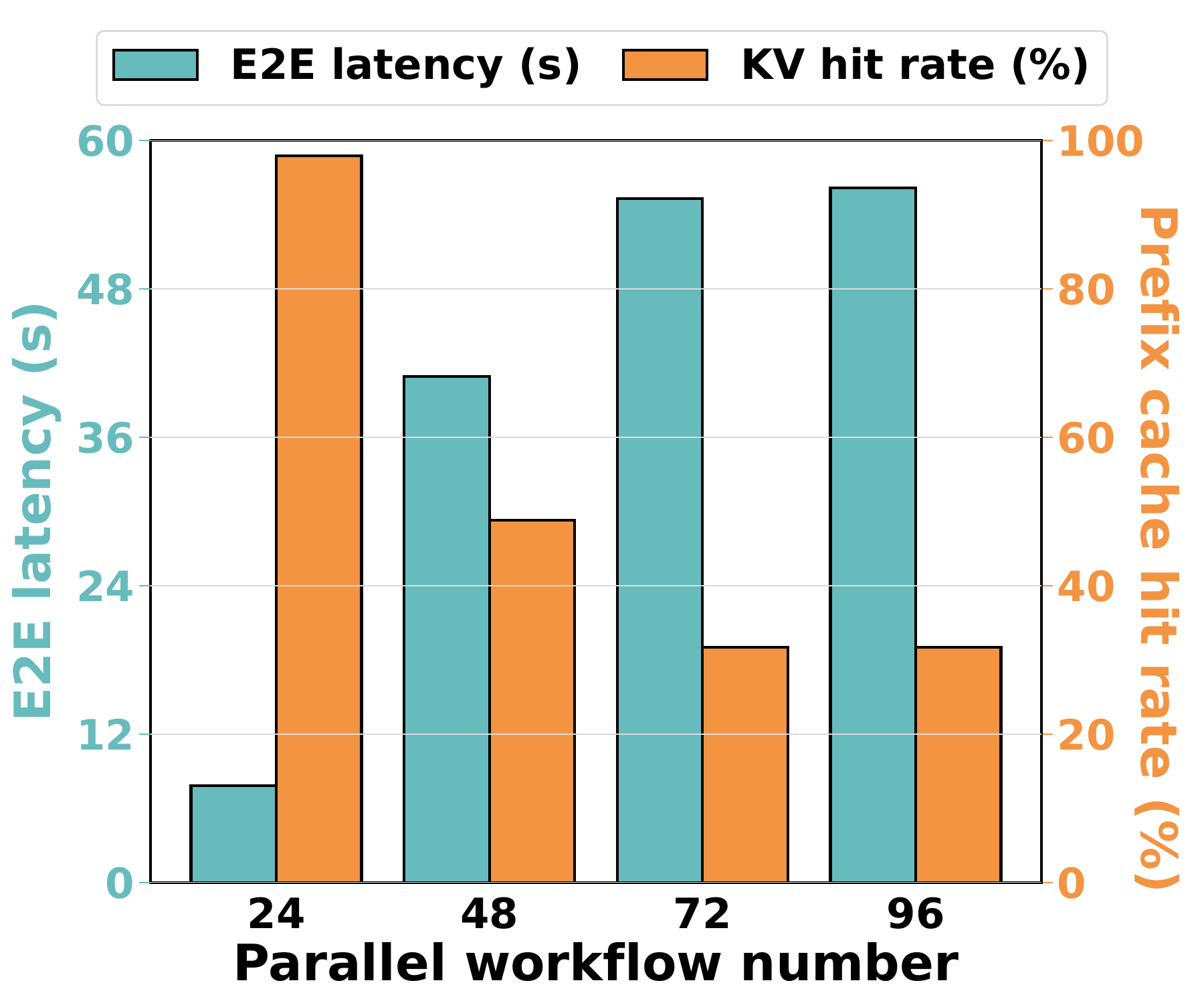}  
        \caption{KV cache thrashing}
        \label{fig:workload_profile}
    \end{subfigure}
    \begin{subfigure}{0.32\textwidth}
        \centering
        \includegraphics[width=\textwidth]{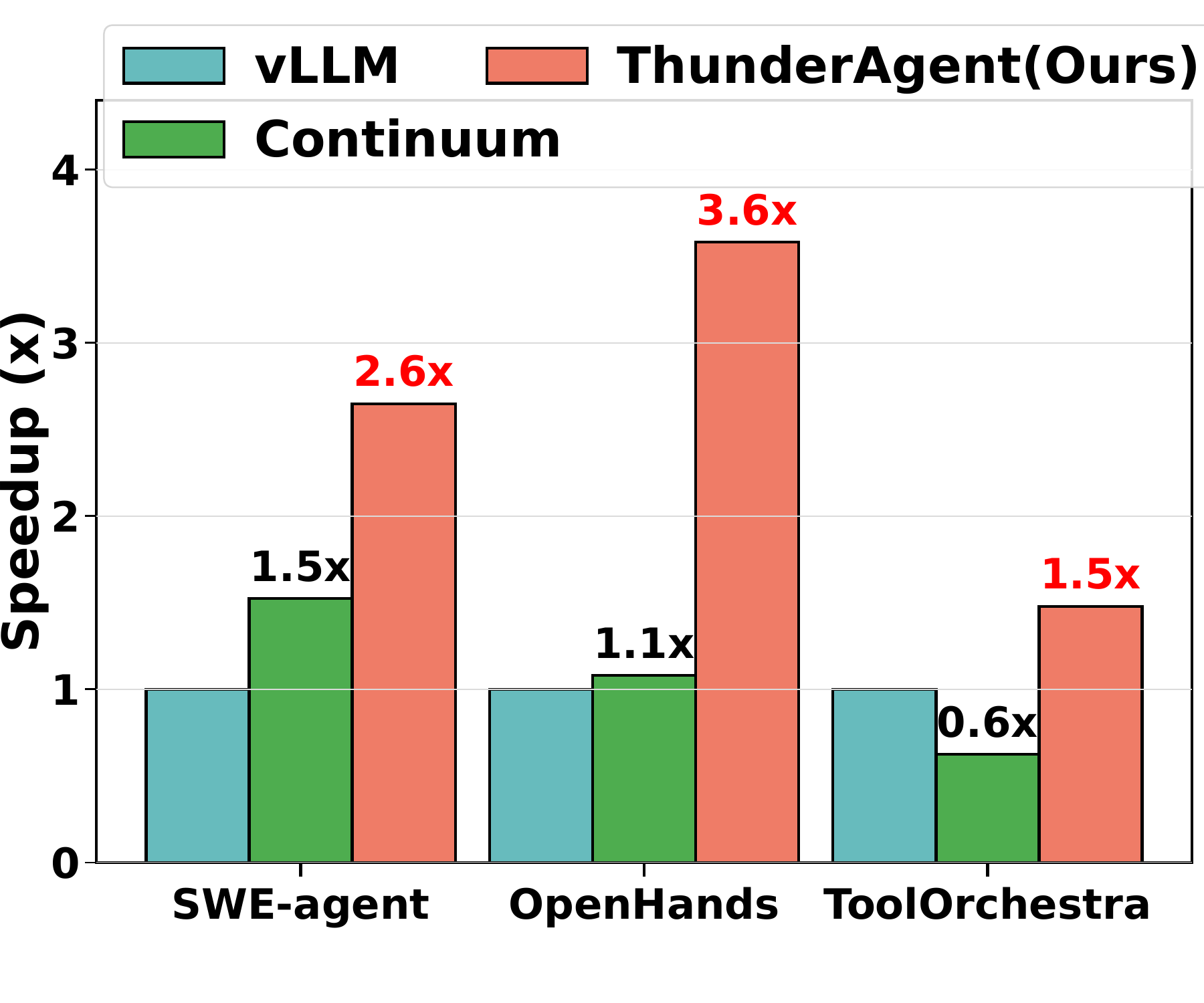}  
        \caption{ Speedup from \ouralg}
        \label{fig:benefit}
    \end{subfigure}
    \caption{\textbf{Performance comparison of \ouralg against prior agent inference systems as the parallel workflow number (i.e., batch size) increases.} We evaluate the GLM-4.6 MoE model serving SWE-Agent on SWE-Bench Lite (Figures a and b) and SWE-Agent, OpenHands, and ToolOrchestra (Figure c) on an 8$\times$H100 GPU cluster. Results show that: (a) Current inference systems fail to maintain high throughput at large batch sizes. (b) Throughput degradation is primarily caused by low KV cache hit rates, which increase end-to-end request latency. (c) \ouralg achieves high throughput compared to prior inference systems by reducing KV-cache thrashing and managing the lifecycle of tool execution resources.}
    \label{fig:three_subfigures}
    \vspace{-4mm}
\end{figure*}
Recent advances in language models have expanded their use beyond basic chatbots to complex agents~\citep{yang2025qwen3technicalreport, 5team2025glm45agenticreasoningcoding}. 
These agents address real-world problems in domains such as coding~\citep{jimenez2024swebenchlanguagemodelsresolve,jain2024livecodebenchholisticcontaminationfree} and computer-use~\citep{xie2024osworldbenchmarkingmultimodalagents,bonatti2024windowsagentarenaevaluating} by interleaving long reasoning with external tool calls (e.g., compilers, retrievers), often operating as
autonomous systems that execute multi-step workflows without real-time human intervention. 
However, the throughput of modern inference systems degrades as the number of agentic requests being processed increases (\autoref{fig:throughputs_vs_bs}). Meanwhile, rollout accounts for over \textbf{70\% }of the total wall-clock time in reinforcement learning (RL)~\citep{Sheng_2025, fu2025areallargescaleasynchronousreinforcement}. 


As agentic workflows become increasingly autonomous at scale, overall system efficiency is governed by sustained throughput rather than tail latency, whereas human-in-the-loop applications are often dominated by user response times. Therefore, higher throughput directly reduces serving cost by amortizing hardware over more completed workflows. Moreover, in asynchronous RL, higher rollout throughput mitigates policy lag between the parameters used for data collection and those being updated. This allows the model to learn from data with reduced staleness, improving both convergence speed and final policy quality~\citep{fu2025areallargescaleasynchronousreinforcement, Sheng_2025, shenfeld2025rlsrazoronlinereinforcement,zheng2025stabilizingreinforcementlearningllms}.


However, current agentic inference systems provide sub-optimal throughput because they are loosely combined from isolated components: an off-the-shelf model inference engine(e.g., vLLM~\citep{kwon2023efficientmemorymanagementlarge} or SGLang~\citep{zheng2024sglangefficientexecutionstructured}) coupled with a general-purpose tool orchestrator (e.g., Kubernetes). While agentic workflows involve multiple turns of model and tool requests, these components schedule and allocate resources separately on a \textit{per-request basis}, without end-to-end knowledge of the entire workflow. This design gives rise to three key challenges:

\begin{enumerate}[itemsep=1pt,topsep=2pt,leftmargin=*]
\item
\textbf{KV cache thrashing.} Under memory pressure at high concurrency, request-aware systems reactively evict the KV cache of programs in their tool-execution intervals to make room for decode requests of other concurrent programs, without foresight into future reuse in the agent workflow.
Thus when the tool call completes, the system needs to rerun prefill to recover its whole interaction history. The re-prefill cost increases the average end-to-end latency of agent workflows by up to \textbf{7.14$\times$} (see \autoref{fig:workload_profile}) and decreases throughput.

\item
\textbf{Cross-node memory imbalance.} The request-aware engines suffer from imbalanced utilization in multi-node inference setups. 
Existing engines pin all requests from the same agentic workflow to a fixed node to maximize the KV cache hit rate. 
However, as context lengths scale rapidly and unpredictably in agent workflows, 
some nodes reach capacity while others remain underutilized under this routing policy. 

\item
\textbf{Tool lifecycle obliviousness.}
The request-aware orchestrators struggle to decide when to release and prepare resources and environments required for tool execution.
Thus, unused sandboxes and API servers continue to occupy critical disk space and network ports, leading to cumulative resource exhaustion and system failures. 
Meanwhile, agentic workflows have to wait for extremely long setup time before reasoning.
\end{enumerate}

This work introduces \ouralg, an agentic inference system that adopts an end-to-end view of agentic workflows to enable high-throughput agentic serving and RL rollout. 
Our specific contributions are:

\begin{enumerate}[itemsep=0.0pt,topsep=0pt,leftmargin=*]

    \item \textbf{Program abstraction:} We abstract agent workflow as \textit{agentic programs}. 
    An agentic program is a first-class scheduling unit that persists across multiple model invocations and tool executions, exposing semantic state to the runtime.
    A program tracks metadata for the workflow's identifier, execution phase  (i.e., reasoning or acting), scheduling status, total tokens, and tool resources. This abstraction decouples scheduling from execution backends (e.g., vLLM/SGLang/TensorRT-LLM), enabling seamless integration of new workflows.
    
    \item \textbf{Program-aware scheduler:} Based on the program abstraction, we cast agentic inference scheduling as a constrained optimization problem to minimize the recomputation and caching overheads, and maximize prefilling and decoding throughput, subject to GPU memory capacity. 
    We introduce two key mechanisms: 
    \begin{enumerate}[itemsep=0.0pt,topsep=0pt,leftmargin=*]
        \item \textbf{State-aware pausing:} If the execution backend experiences memory pressure, we selectively pause workflows that are currently in the acting state with tool call. This design helps
        preserve memory for programs that are in the reasoning state and eliminate arbitrary, sub-optimal KV cache eviction.
        \item \textbf{Dynamic migration:} We migrate agent programs across data parallel (DP) GPU nodes to mitigate memory imbalance. We accomplish this by 
        enabling all DP nodes share a global program-aware waiting queue, rather than enforcing that requests from a program are always sent to the same node. 
    \end{enumerate}
    
    \item \textbf{Program-aware tool resource management:} In long-horizon agentic workloads, tool environments are persistent resources whose mismanagement directly limits sustained throughput. By tracking execution dependencies, \ouralg overlaps I/O-intensive environment initialization with LLM reasoning.
    For completed programs, we implement a lifecycle-aware garbage collector
    that leverages program termination signals to reclaim tool resources such as Docker sandboxes and network ports.
    Consequently, this prevents accumulated resource leakage and ensures sustained high-throughput agentic inference in \ouralg.

\end{enumerate}

The above contributions cannot be achieved within request-aware inference engines. Without an explicit representation of program states and workflow dependencies, request-aware schedulers cannot distinguish temporary tool waits from termination or coordinate GPU memory with program-level resource scheduling.


We evaluate \ouralg across diverse agentic workloads. For \textbf{serving}, we evaluate the ToolOrchestra~\citep{su2025toolorchestraelevatingintelligenceefficient} as routing agent on HLE-Bench~\citep{phan2025humanitysexam}, SWE-Agent~\citep{yang2024sweagentagentcomputerinterfacesenable} and OpenHands~\citep{wang2025openhandsopenplatformai} as coding agent on SWE-bench~\citep{jimenez2024swebenchlanguagemodelsresolve}, and OpenHands as scientific discovery agent on ScienceAgentBench~\citep{chen2024scienceagentbenchrigorousassessmentlanguage}, achieving \textbf{1.48--3.58$\times$} throughput improvements as illustrated in \autoref{fig:benefit}. For \textbf{RL rollouts}, we further test the coding agents on distributed GPU nodes, achieving  \textbf{1.79--3.92$\times$} improvements compared with prior SOTA systems.

Beyond these results, we evaluate~\ouralg under common large-scale deployment settings. As detailed in the appendix,~\ouralg's throughput scales near-linearly with multiple worker replicas on up to $64$ H100 GPUs (\autoref{sec:rl_scale}), and remains effective when combined with KV-cache offloading (\autoref{app:offload_compatibility}).~\ouralg has been adopted by open-source frameworks such as SkyRL and NVIDIA Dynamo (\autoref{app:adoption}).

\vspace{-2mm}
\section{Background}
In this section, we provide background on the properties and existing approaches to support agentic inference. 
\subsection{System Properties of Current Agentic Workflows}
\label{sec::react agent system}

Current agentic workflows alternate between reasoning and acting during generation. Formally, at each step $t$, the agent receives an observation $o_t \in \mathcal{O}$ and produces an emission $e_t = (\ell_t, a_t) \in \mathcal{L} \times \mathcal{A}$, where $\ell_t$ denotes a thought and $a_t$ represents an action. We define the cumulative context at step $t$ as $c_t = (o_1, e_1, \dots, o_t)$, which captures the interaction history of agentic workflows. Conditioned on $c_t$, $e_t$ is sampled from a policy $\pi(e_t|c_t)$.

This workflow keeps two persistent states: \textit{(i) GPU Memory}, where the KV cache of $c_t$ serves as the workflow’s memory. As the trace grows incrementally, $c_{t+1}$ extends $c_t$ as a prefix, enabling theoretical near-complete KV cache reuse rates across steps. \textit{(ii) Tool Environment}, where external resources (e.g., sandboxes) initialized at $t=1$ \mbox{must remain consistent and accessible throughout the execution.}

These stateful dependencies necessitate a \textit{program-level} view of agentic inference trajectories, thereby enabling to system to coordinate heterogeneous resources and manage state across long-running workflows. However, existing inference systems treat each thought $l_t$ and action $a_t$ as an independent, stateless request.

\vspace{-2mm}
\subsection{Existing Agentic Inference Systems}

Prior work focuses on optimizing the individual components in agentic inference, including the LLM inference engine or tool orchestrator ~(\autoref{appendx: kv opt}, \autoref{sec: pd+offload}), but there are very few works that provide end-to-end optimization for agentic workflows across GPU, CPU, and remote resources. 

Autellix models multi-turn agentic workflows as \textbf{GPU-only programs} and tracks the accumulated GPU execution time in a central process table~\citep{luo2025autellixefficientservingengine}. However, it ignores workflow locality, allowing concurrent workflows to aggressively evict other's KV cache, triggering \textbf{KV cache thrashing} under heavy workloads.

Continuum is another recent serving system designed for multi-turn agentic workflows~\cite{li2025continuumefficientrobustmultiturn}. It employs a time-to-live (TTL) mechanism to pin KV caches in HBM, thereby mitigating context thrashing during tool execution. 
However, it fails to solve the KV cache eviction problem. The first reason is that most tools take an \textbf{unpredictable} amount of time (e.g. remote model APIs in ToolOrchestra~\citep{su2025toolorchestraelevatingintelligenceefficient}, compilers in code agents, and web applications for computer use agents~\citep{zhou2024webarenarealisticwebenvironment}). 
Such unpredictable tools trigger severe thrashing as well as stranded KV cache memory in Continuum due to incorrect TTL estimates. 
Moreover, once the decoding memory of the running workflow surpasses the GPU limit, the system preempts and evicts pinned KV cache as well. This leads to unavoidable thrashing and corresponding throughput degradation shown in \autoref{fig:throughputs_vs_bs}. 


These limitations underscore the need for a simple and fast system for agentic inference. We envision such a system as a program-aware scheduling layer for emerging agentic inference systems (e.g., \citet{zhang2026megaflow}).

\section{Challenges in Existing Agentic Inference Systems}
\label{sec:motivation}
\begin{figure*}[t!]  
    \centering
    \begin{subfigure}{0.32\textwidth}
        \centering
        \includegraphics[width=\textwidth]{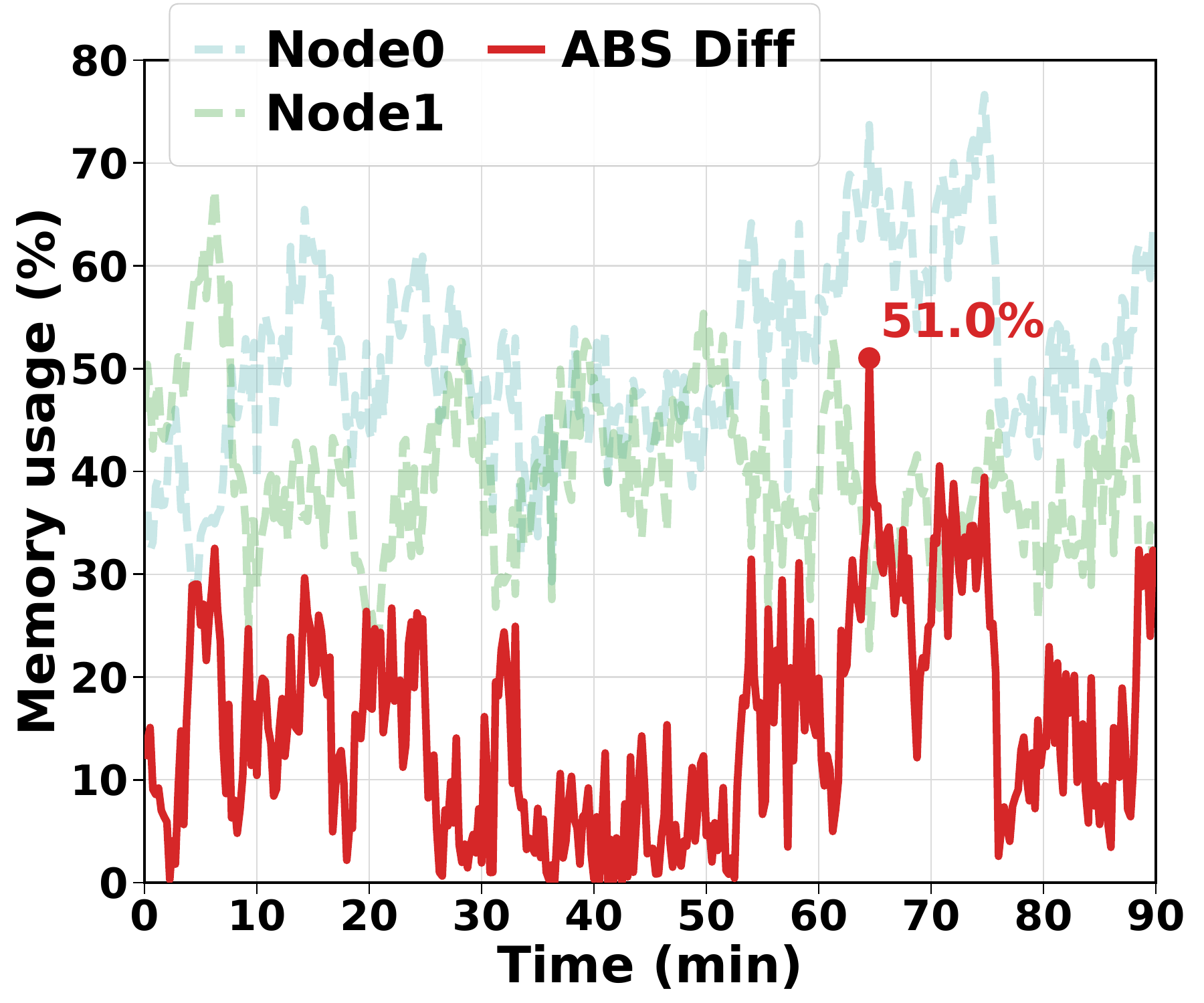}  
        \caption{Memory imbalance in rollout}
        \label{fig:dp unbalance}
    \end{subfigure}
    \begin{subfigure}{0.32\textwidth}
        \centering
        \includegraphics[width=\textwidth]{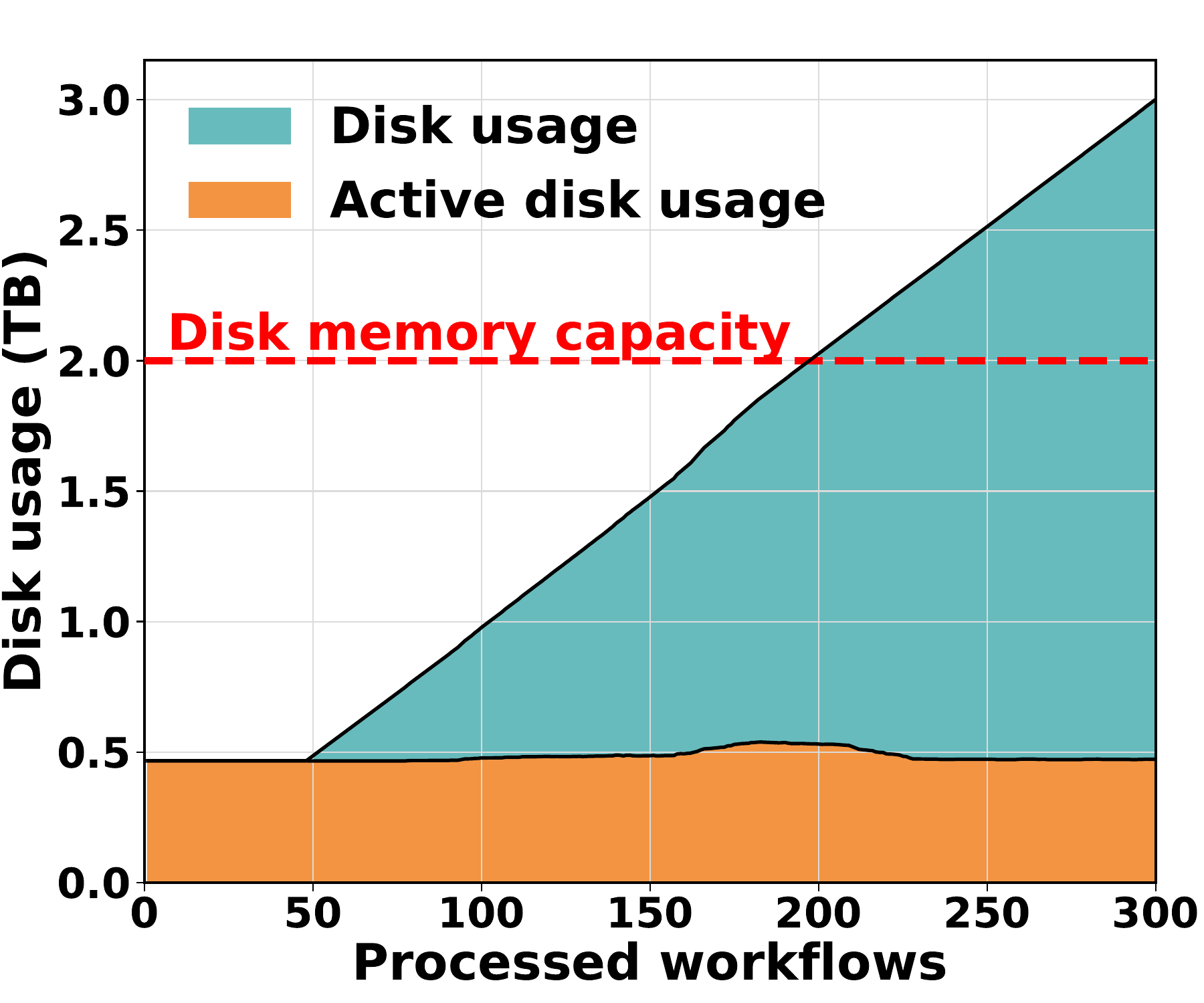}  
        \caption{Runtime disk usage for tool envs.}
        \label{fig:disk mem}
    \end{subfigure}
    \begin{subfigure}{0.32\textwidth}
        \centering
        \includegraphics[width=\textwidth]{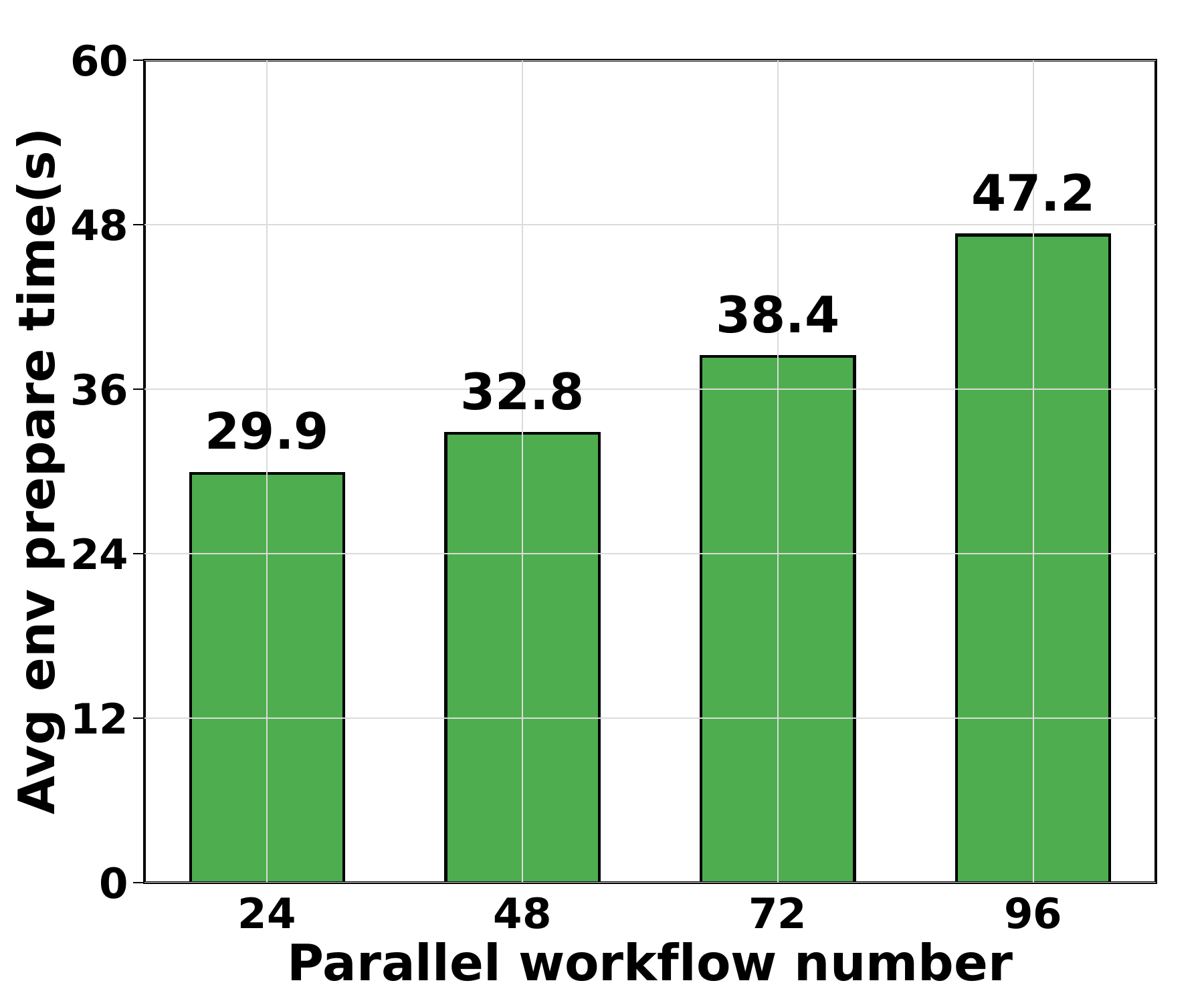}  
        \caption{Tool env. preparation time}
        \label{fig:env prepare}
        
    \end{subfigure}
    \caption{\textbf{Demonstrations of the memory imbalance and tool resource management problems for current agentic inference systems.} We evaluate vLLM + Kubernetes on OpenHands RL rollout using the GLM 4.6 model on SWEBench-Lite with two 8$\times$H100 GPU Nodes. The observations show: (a) Max memory imbalance can achieve 51\% on 90 min rollout tests when applying vLLM KV-aware router. (b) Failure to garbage collect tool execution environments gradually causes resource usage to exceed system capacity. (c) \mbox{Average tool execution environment preparation time grows fast as parallel workflow number increases.}}
    \label{fig:3}
    \vspace{-2mm}
\end{figure*}

This section profiles vLLM combined with Kubernetes as a representative baseline for multi-turn agentic inference, and synthesize its key inefficiencies. Notably, the identified limitations cannot be solved by replacing the inference engine, but rather require new program-aware abstractions. By default, we use GLM 4.6 model for OpenHands RL rollout on two 8$\times$H100 GPU nodes.
\vspace{-2mm}
\subsection{KV Cache Thrashing}
\label{sec:kv cache management}

Agentic workflows exhibit a high theoretical KV cache reuse rate during their execution. However, in existing LLM serving systems, each step is served as an independent and stateless request.  Under high concurrency, this request-level scheduling causes KV cache to be
frequently evicted during tool execution to accommodate newly arriving
requests, resulting in repeated eviction and reprefill, which we refer to
as \textbf{KV cache thrashing}.


As shown in \Cref{fig:workload_profile}, this thrashing intensifies as the number of parallel workflows increases. The resulting degradation in cache hit rates triggers frequent and costly re-prefill, where the prefix must be recomputed upon tool completion. This redundancy significantly increases the end-to-end latency of each request by up to \textbf{7.14$\times$} compared to a non-thrashing setting, leading to severe throughput degradation.

\vspace{-2mm}
\subsection{Cross-Node Memory Imbalance} 
\label{sec:cross node imbalance}
Current policies for routing requests across worker replicas are also sub-optimal. Existing multi-turn schedulers~\citep{zheng2024sglangefficientexecutionstructured, vllm_kvaware_routing} greedily assign requests to the target with the highest KV-cache locality in order to maximize cache reuse. However, this policy ignores the fact that the memory load can become imbalanced across nodes. For instance, the KV-aware router in vLLM~\citep{vllm_kvaware_routing} sends all requests from the same agentic workflow to the same node. Since different workflows can exhibit highly heterogeneous KV footprints and execution lifetimes, this policy often results in severe memory imbalance across nodes, with some nodes are overloaded while others remain lightly utilized. Similarly, the cache-aware router in SGLang routes via a router-side radix tree that approximates worker KV state. Under high agentic concurrency, worker-side evictions from KV thrashing leave this tree stale, yielding both poor cache reuse and cross-node imbalance.

As shown in \Cref{fig:dp unbalance}, during a 90 minute snapshot of agentic RL rollout, the memory usage between two DP nodes diverges by\textbf{ more than 20\% for over 37 minutes, reaching a peak imbalance of 51\%}.

\subsection{Tool Lifecycle Obliviousness}
\label{sec:Tool mismanagement}
Current agentic inference systems do not synchronize the external tool orchestrator's lifecycle with the LLM inference engine, resulting in silent resource wastage and latency overhead on the tool orchestrator side.

\paragraph{Resource leakage and unused sandboxes.} 
\vspace{-2mm}
\Cref{fig:disk mem} showcases that the total disk space consumption increases linearly with the number of processed workflows, eventually exceeding system capacity. This is because unused resources (e.g., Docker images of finished workloads) are not reclaimed when workflows complete. This inefficient garbage collection leads to fatal system instabilities for long-running agentic inference workloads.

\paragraph{Costly environment preparation.} 
We observed that most agentic workloads need to prepare environments before initiating the multi-turn trajectory. For example, coding agents need to pull dockers, install related packages and build repositories. Furthermore, this preparation time is costly and increases with the parallel workload number, as shown in \Cref{fig:env prepare}. If the LLM inference engine needs to wait until the environments are fully prepared, this overhead will extend the end-to-end latency of the inference system. 










\vspace{-2mm}
\section{\ouralg: A Program-Aware Agentic Inference System}
\label{sec:thunder_agent}

With all findings in \autoref{sec:motivation}, we present \ouralg, a program-aware system for high throughput agentic inference. We model the Agentic Program in \autoref{sec:program_definition}, which serves as our primary abstraction for scheduling. \autoref{sec:utility_model} formalizes a cost model to guide our system design. Built upon these foundations, we detail our KV cache scheduling policy in \autoref{sec:scheduling_policy} and tool resource management strategy in \autoref{sec:tool_resource}.

\begin{table}[ht]
\caption{\textbf{Summary of Notations for agentic programs.} Each program instance is characterized by its identity, execution phase, tool environments, resource footprint, and scheduling state.}
\vspace{-1.5em}
\label{tab:notations}
\begin{center}
\small
\renewcommand{\arraystretch}{1.1}
\begin{tabularx}{\columnwidth}{lX}
\toprule
\textbf{Notation} & \textbf{Description} \\
\midrule
$P$ & Agentic program instance \\
$\mathit{ID}$ & Unique global identifier for the program \\
$c$ & Number of tokens in the context \\
$\mathcal{T}$ & Set of tool environments required by the program \\
$\mathcal{L}$ & Backend placement for cache affinity \\
$\tau$ & Execution phase: Reasoning (\textbf{R}), \text{Acting (\textbf{A})} \\
$s$ & Scheduling status: $\{\text{Active, Paused, Terminated}\}$ \\
\bottomrule
\end{tabularx}
\end{center}
\vspace{-8mm}
\end{table}

\begin{figure*}[!t]
    \centering
    \includegraphics[width=1.0\columnwidth]{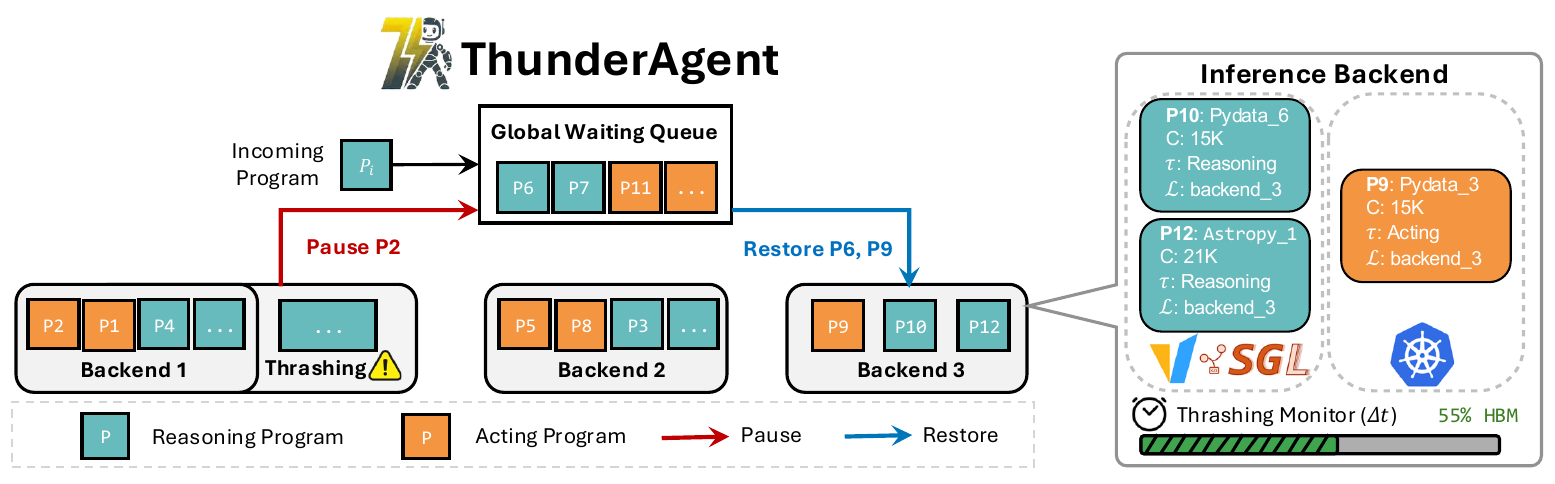}
    \caption{\textbf{An Overview of \ouralg.} \textmd{We show the transition between scheduling states and memory management. \ouralg queries the state of each data parallel backend periodically every $\Delta t$ time. Here, Backend \#1 triggers thrashing, while Backend \#3 is underutilized. The global waiting queue shared by all Backends then pauses and collects acting Program \#2 back to the queue while releasing reasoning Program \#6 and \#9, to stop the KV-cache thrashing in Backend \#1 and reduce memory imbalance of Backend \#3.}
    }
    \label{fig:arch}
    \vspace{-4mm}
\end{figure*}

\subsection{Program Abstraction}
\label{sec:program_definition}
The \textbf{Agentic Program} serves as a fundamental abstraction that encapsulates both the logical execution flow and the system-level dependencies of agentic workflows. Formally, we define an agentic program $P$ as a tuple:
\begin{align}
P = \langle \mathit{ID}, c, \mathcal{T}, \mathcal{L}, \tau, s \rangle,
\end{align}
where $\mathit{ID}$ represents the unique global identifier. $c$ denotes the number of tokens in the context, corresponding to the KV cache memory footprint during active execution. $\mathcal{T}$ tracks the set of tool environments used by the program, enabling garbage collection when no program requires them further. $\mathcal{L}$, $\tau$, and $s$ denote the node placement, execution phase, and scheduling status, respectively, facilitating program-level KV cache thrashing \mbox{reduction and cross-node transferring. A metadata example is demonstrated on the right side of \autoref{fig:arch}.} 

\ouralg directly wraps existing LLM engines and tool orchestrators by interfacing with OpenAI-style endpoints. Program IDs allow the system to distinguish requests from different agentic workflows. We elaborate on the simplicity of integrating \ouralg with existing inference services in Appendix \ref{app:easy_intergrate}. 

\subsection{Cost Model} 
\label{sec:utility_model}

During multi-turn agentic inference, only the resources used for active prefilling and decoding contribute to the system's effective throughput, while re-computation, used capacity, and idle caching constitute resource waste. We encompass this in a cost model for GPU resource consumption, which isolates effective costs from non-productive usage. We adopt the Space-Time Product (STP)~\citep{5388441} as our primary metric, defined as the integral of the memory footprint over processing time. The STP cost during a process phase is formalized as:
\begin{equation}
    \text{Cost}_{\text{x}} = \int_{0}^{t_{x}} M_x(t) \, dt,
\end{equation}
\noindent where $t_x$ is the duration of process $x$ (e.g., prefill). Since memory usage $M_x(t)$ can be directly quantified by the KV cache token count used in LLMs, we define our cost model as the integral of token count over time.

The total cost of agentic inference comprises five distinct components: decoding, prefilling, recomputation, unused capacity, and idle caching. We explicitly distinguish incremental prefilling for tool execution results from recomputation over historical interactions, with the latter leading to significantly higher cost due to re-computing evicted KV cache over the full context. Formally, this yields the following cost decomposition:
\begin{equation}
    \text{Cost}_{\text{total}} \approx \text{Cost}_{\text{decode}} + \text{Cost}_{\text{prefill}} + \text{Cost}_{\text{recompute}} + 
    \text{Cost}_{\text{unused}} + \text{Cost}_{\text{caching}}
    \label{eq:cost_decomp}
\end{equation}
\noindent In this decomposition, $\text{Cost}_{\text{decode}}$ and  $\text{Cost}_{\text{prefill}}$ represents the effective work that contributes to inference throughput. The remaining terms are wasted system overheads: $\text{Cost}_{\text{recompute}}$ stems from KV cache thrashing (\autoref{sec:kv cache management}); $\text{Cost}_{\text{unused}}$ reflects memory imbalance across data parallel (DP) inference backend replicas (\autoref{sec:cross node imbalance}); and $\text{Cost}_{\text{caching}}$ accumulates while holding memory during external tool execution (\autoref{sec:Tool mismanagement}).

\subsection{Scheduling Policy}
\label{sec:scheduling_policy}
Based on the cost model above, the optimization target of our scheduling policy is to minimize the non-productive overhead components: $\text{Cost}_{\text{recompute}}$, $\text{Cost}_{\text{unused}}$, and $\text{Cost}_{\text{caching}}$, thereby maximizing throughput.

\subsubsection{Reducing Recomputation and Caching Costs via Program-Aware Waiting Queue}
\label{sec:4.3.1}
As identified in \autoref{sec:kv cache management} and \autoref{fig:workload_profile}, KV cache thrashing serves as the primary bottleneck for throughput degradation. To address this limitation, the system must minimize $\text{Cost}_{\text{recompute}}$ by explicitly controlling the number of active programs. \ouralg achieves this by introducing a program-aware waiting queue. Our system utilizes this queue to schedule program execution, determining which program should be executed in GPU versus which should be swapped out based on their token length $c$ and execution phase $\tau$. Here, we formalize the scheduler behavior using two primitive operations: \textbf{Restore} and \textbf{Pause}, as follows.

\begin{itemize}[itemsep=0.0pt,topsep=0pt,leftmargin=*]
    \item \textbf{Restore.} This operation admits a program into active execution. Given a program $P = \langle \mathit{ID}, c, \mathcal{T}, \mathcal{L}, \tau, s \rangle$ with $s = \text{Paused}$ and $\mathcal{L} = \varnothing$,  
  Restore$(P)$ assigns $P$ to a backend $\mathcal{L}'$ with available capacity and updates
  \begin{align}
  P \leftarrow \langle \mathit{ID}, c, \mathcal{T}, \mathcal{L}', \tau, \text{Active} \rangle,
  \end{align}
    \item \textbf{Pause.} This operation removes a program from active execution. 
Given a program $P = \langle \mathit{ID}, c, \mathcal{T}, \mathcal{L}, \tau, s \rangle$ 
with $s = \text{Active}$,  
Pause$(P)$ unbinds $P$ from its backend, releases its KV cache for preemption, and updates
\begin{align}
P \leftarrow \langle \mathit{ID}, c, \mathcal{T}, \varnothing, \tau, \text{Paused} \rangle .
\end{align}
\end{itemize}

\noindent Building on these two operations, we next introduce our scheduling policy to minimize KV cache thrashing.

\paragraph{Periodic thrashing detection.}  The program abstraction in \autoref{sec:program_definition} provides us with the KV cache size of acting programs. Notably, this is unavailable in request-level systems (as in \autoref{sec:motivation}). We define the thrashing condition for a DP backend $\mathcal{L}$ as the state where program memory demand exceeds total capacity:
\begin{equation}
    \text{C}_{\text{total}} < \sum_{p \in \mathcal{L}} c_p
\end{equation}
\noindent where $\text{C}_{\text{total}}$ denotes the fixed token capacity of the KV cache pool for backend $\mathcal{L}$. During decoding, the context length $c_p$ of agentic workflows grows rapidly, which can trigger memory thrashing \emph{mid-execution} even without of new arrivals. Unlike baseline schedulers (e.g., Continuum) that only perform checks on whether to admit a workflow upon its arrival, we implement a \textbf{periodic monitor} that evaluates the memory usage at fixed intervals $\Delta t$, \mbox{allowing preemptive detection and mitigation of memory pressure caused by context growth.}

When KV cache thrashing is imminent, \ouralg invokes \textbf{Pause} operation to suspend active programs and free memory size $\Delta C = \sum_{p \in \mathcal{L}} c_p - \lambda_{\max} \cdot \text{C}_{\text{total}}$ until the total memory usage falls below the limit $\lambda_{\max} \cdot \text{C}_{\text{total}}$. Conversely, when the backend has available space, meaning
$\sum_{p \in \mathcal{L}} c_p < \lambda_{\min} \cdot C_{\text{total}}$, \ouralg restores paused programs from the waiting queue via \textbf{Restore}, ensuring that the restored program keeps the total memory below $\lambda_{\max} \cdot \text{C}_{\text{total}}$. Here, $\lambda_{\max}$ and $\lambda_{\min}$ denote the high- and low-watermarks
of memory usage, respectively, together forming a hysteresis window that
stabilizes our scheduling. In practice, we set both value to be 1, as the shared prompt across programs implicitly reserves sufficient memory buffer.

With this program-level periodic capacity check, \ouralg can guarantee that there will be no KV cache thrashing by reserving memory for active programs during the acting phase. However, the tradeoff is that when programs engage in long-running tool execution, the GPU memory occupied by acting programs is idle. To balance the cost of caching against recomputation, we incorporate a time-decay mechanism into the thrashing check that progressively discounts the effective weight of acting programs' tokens. This allows the scheduler to evict long-idling caching when memory pressure rises, rather than holding them indefinitely:
\begin{equation}
\label{eq:time_decay}
    \text{C}_{\text{total}} < \sum_{p \in \mathcal{L},\tau = \textbf{R}} c_p +\sum_{q \in \mathcal{L},\tau = \textbf{A}} c_q \times f(t_q)
\end{equation}
Specifically, $t_q$ is the tool execution time of program $q$ in the current step. $f(t)$ is a time-decay function designed to balance $\text{Cost}_{\text{caching}}$ and $\text{Cost}_{\text{recompute}}$. Note that this decayed check governs both \textbf{Pause} and \textbf{Restore}: shrinking an idle program's weight admits more reasoning programs, whose decoding then raises pressure and triggers the eviction of its KV cache. By dynamically lowering the effective memory priority of acting programs over time, $f(t)$ encourages the scheduler to evict caches that remain idle.
In \autoref{app:ft_discussion}, we prove that when tool execution latencies satisfy the memoryless property (i.e., the remaining execution time is independent of the elapsed duration), exponential decay is the only admissible form of $f(t)$.

\paragraph{Minimizing $\text{Cost}_{\text{recompute}}$ via Shortest-First Eviction.}
With the eviction and restoration conditions above, the remaining question in handling thrashing is to determine which subset of active programs to pause such that the recomputation cost is minimized. In this paragraph, we demonstrate that evicting programs with the smallest KV cache size yields the optimal solution, with a detailed proof provided in \autoref{app:prefill_stpcost}.

\begin{lemma}[Quadratic Recomputation Cost]
\label{lem:recompute_quadratic}
Given  a program $P_i$ with context length $c_i$, the recomputation cost incurred by reprefilling its KV cache scales quadratically with $c_i$, i.e.,
\begin{equation}
    \text{Cost}_{\text{recompute}} = \int_{0}^{t_{\mathrm{recompute}}} c_i(t) \, dt \propto c_i^2.
\end{equation}
\end{lemma}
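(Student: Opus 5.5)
The argument applies the Space-Time Product definition of \autoref{sec:utility_model} directly to the re-prefill of an evicted context, under a simple throughput model of prefill. \textbf{Modeling assumption:} re-prefill proceeds at an approximately constant rate $r$ tokens per unit time. This is reasonable in the compute-bound prefill regime, where the engine processes the evicted prefix in chunks of fixed size (chunked prefill). Under this assumption the KV cache materialized for $P_i$ grows linearly from $0$ to $c_i$. Writing $c_i(t)$ for the number of tokens of $P_i$ resident in the KV cache at time $t$ after re-prefill begins, we get
\begin{equation*}
c_i(t) = r\,t, \qquad 0 \le t \le t_{\mathrm{recompute}} = c_i / r .
\end{equation*}
The recomputation duration itself is therefore linear in $c_i$.

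Substituting into the cost definition gives
\begin{equation*}
\text{Cost}_{\text{recompute}} = \int_0^{c_i/r} r\,t\,dt = \frac{c_i^2}{2r} \propto c_i^2 .
\end{equation*}
Two factors of $c_i$ thus combine. One comes from the memory footprint, which grows to size $c_i$. The other comes from the duration, which also scales with $c_i$.

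I would then relax the constant-rate assumption in two directions.
\begin{itemize}
\item \textbf{Rate fluctuations.} If the rate varies but stays bounded, $r_{\min} \le \dot c_i(t) \le r_{\max}$, the same integral is sandwiched between $c_i^2/(2r_{\max})$ and $c_i^2/(2r_{\min})$. This yields $\Theta(c_i^2)$ scaling.
\item \textbf{Batch-wide allocation.} If the engine allocates the full KV footprint at the start of the batch, then $M(t)\approx c_i$ for the whole duration $c_i/r$. This still gives $c_i^2/r$, so the quadratic scaling is robust to the allocation policy.
\end{itemize}

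\textbf{Main obstacle.} The hard step is justifying that the prefill rate is roughly independent of $c_i$. Attention FLOPs grow quadratically in context length, so for very long contexts the time itself could become superlinear. That would make the cost grow faster than $c_i^2$, not slower. I would handle this by stating the lemma in the regime where linear (MLP and projection) FLOPs dominate, which holds for typical agentic context lengths. I would also note that any superlinear correction only strengthens the conclusion used downstream: recomputation cost is strictly convex and superadditive in $c_i$, so evicting the shortest programs minimizes it.
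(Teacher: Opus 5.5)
Your proposal is correct and is essentially the paper's argument: the paper likewise uses chunked prefill with a constant per-iteration time to get $c_i(t)\propto t$ and $t_{\mathrm{recompute}} = c_i T_{\mathrm{decode}}/\mathrm{chunk}\propto c_i$, so the integral is $\propto t_{\mathrm{recompute}}^2\propto c_i^2$. This is exactly your $c_i^2/(2r)$ with $r=\mathrm{chunk}/T_{\mathrm{decode}}$. Your caveat that per-chunk attention cost grows with the prefix (adding a $c_i^3$ term to the integral) is a fair refinement that the paper silently assumes away by taking $T_{\mathrm{decode}}$ constant, whereas your closing remark about shortest-first eviction lies outside this lemma and does not follow from convexity alone (e.g.\ lengths $9,10$ with $\Delta C=10$: greedy pays $181$, while evicting only the $10$ pays $100$).
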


\begin{definition}[Eviction Optimization Problem]
\label{def:eviction_problem}
Based on Lemma~\ref{lem:recompute_quadratic}, given a required memory release $\Delta C$, the scheduler aims to 
select a subset $S$ of programs to evict such that the released 
capacity satisfies the constraint while minimizing the total 
recomputation cost. This optimization problem is formulated as follows:
\begin{equation}
\min_{S} \sum_{i \in S} c_i^2 
\quad \text{s.t.} \quad 
\sum_{i \in S} c_i \ge \Delta C.
\end{equation}
\end{definition}

The objective is strictly minimized by selecting smaller $c_i$.  Thus, \ouralg's strategy is to greedily pause and evict programs with the \textbf{shortest context lengths}. We defer the formal proof to Appendix~\ref{app:_minimized_STP_cost}. Based on these analyses, we employ the following scores for restoring and pausing a program in our scheduler:
\begin{equation}
    S_{\mathrm{restore}}(P) = \frac{1}{c_P} + \mathbb{I}(\tau = \textbf{R})
\end{equation}
\begin{equation}
    S_{\mathrm{pause}}(P) = \frac{1}{c_P} + \mathbb{I}(\tau = \textbf{A})
    \end{equation}
\noindent where the indicator function $\mathbb{I}(\cdot)$ enforces strict prioritization of the program's execution state ($\tau$) over context length. Both mechanisms follow the \textbf{shortest-first} policy to minimize recomputation cost. However, the state indicator $\mathbb{I}$ ensures that the scheduler prioritizes pausing Acting programs, thereby minimizing $\text{Cost}_{\text{caching}}$ by reclaiming cached memory, while prioritizing restore Reasoning programs to maximize $\text{Cost}_{\text{decode}} + \text{Cost}_{\text{prefill}}$.

\subsubsection{Reducing Memory Imbalance via Global Program-Aware Waiting Queue}
\autoref{sec:kv cache management} and \autoref{fig:dp unbalance} highlight that memory imbalance across nodes introduces significant $\text{Cost}_{\text{unused}}$, leading to unnecessary program pausing despite sufficient memory capacity from other nodes. To this end, \ouralg unify waiting queues of all backend replicas into a \textbf{global program-aware waiting queue}. 

The key motivation of this design is that $\text{Cost}_{\text{unused}}$ arises only 
when paused programs remain in the waiting queue while some replicas 
have idle memory. Moreover, once a program is paused, its KV cache is 
assumed to be evicted, making its recomputation cost node-agnostic. This allows us to improve cross-node memory balance without sacrificing KV cache locality. The restore policy aligns with load balancing rather than strict KV-aware routing, enabling paused programs to be resumed on the worker replica with available memory capacity. As a result, the global queue bounds the unused cost such that $\text{C}_{\text{unused}} < c_{\mathrm{min}} \cdot \Delta t$\footnote{Since $\Delta t$ is much smaller than a program’s lifetime, we ignore the impact of terminated programs within a single interval.} for every node in the period of $\Delta t$, where $c_{\mathrm{min}}$ represents the minimum token length among paused programs. An overview of the scheduling policy and the global waiting queue in \ouralg is presented in \autoref{fig:arch}.

\subsection{Tool Resource Management}
\label{sec:tool_resource}
Next, \ouralg mitigates the resource leakage and environment setup overheads detailed in \autoref{sec:Tool mismanagement}.
\paragraph{Hook-based garbage collection.} We implement lifecycle hooks that strictly couple the persistence of tool resources with the agentic program's scheduling status $s$. When a program is \textit{Terminated}, the collector triggers an immediate teardown sequence, systematically reclaiming sandboxes, network sockets, and compute slots. The active disk usage in \autoref{fig:disk mem} shows that our resource management policy effectively prevents the accumulation of excessive resources and maintains near-constant disk memory consumption over time.

\paragraph{Asynchronous environment preparation.} The latency involved in initializing a tool execution environment
 (e.g., starting a Docker container and installing dependencies) can be a bottleneck. To address this, \ouralg monitors the global waiting queue; when a high-priority program (high $S_\text{restore}$) approaches the restore threshold, the system asynchronously restores its execution environment before the GPU memory is allocated. This technique effectively \textbf{hides the initialization overhead}, significantly reducing end-to-end latency for tool-call heavy workloads like coding agents and science agents, as demonstrated in \autoref{fig:env prepare}.

\vspace{-2mm}
\section{Experiments}
\label{sec:experiments}

\begin{figure*}[!t]
    \centering
    \includegraphics[width=1.0\columnwidth]{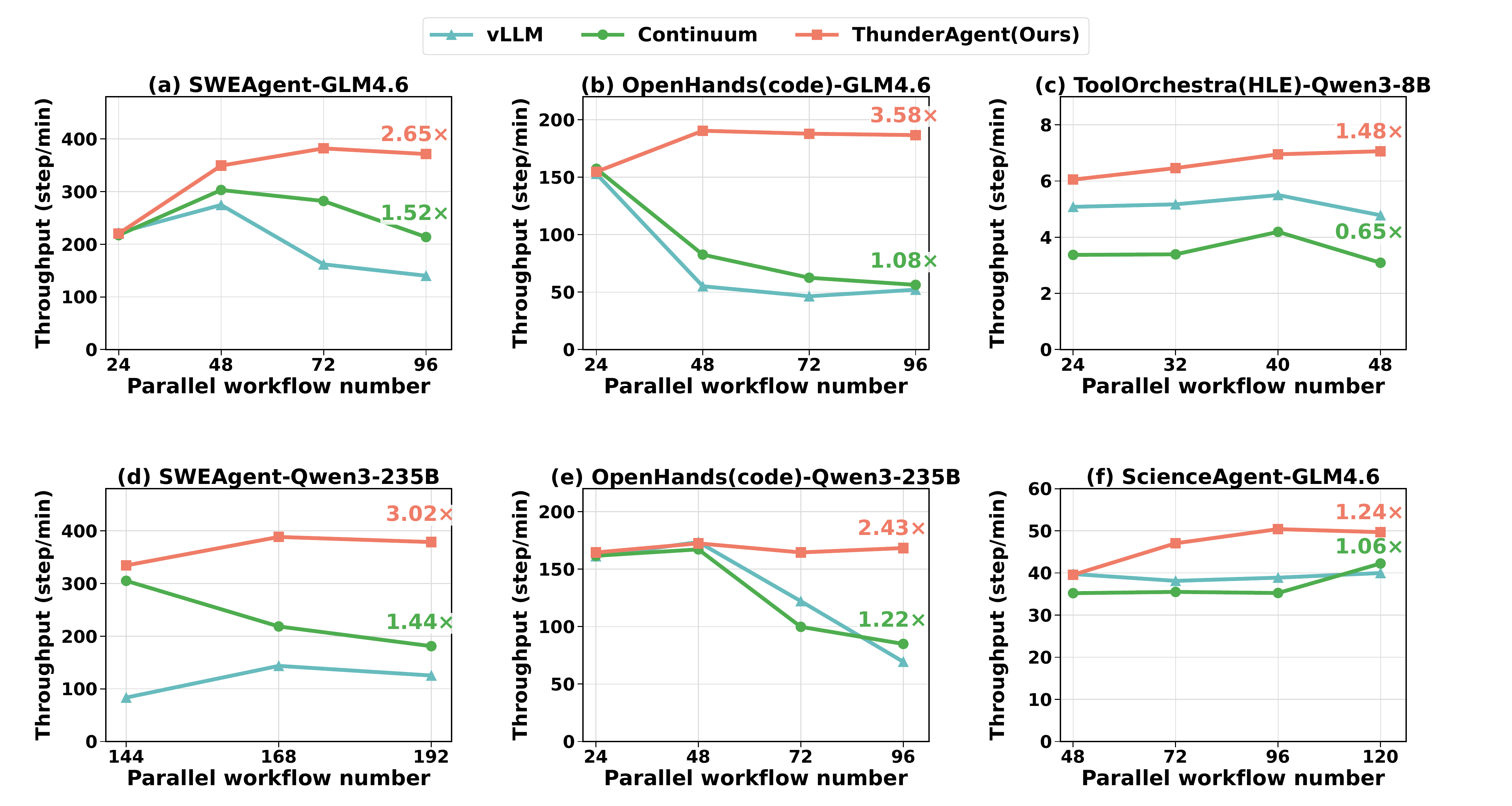}
    \caption{\textbf{Serving Evaluation Results.} \ouralg outperforms vLLM and Continuum across three models, four agentic workflows, and three datasets. For workflows with predictable tool call times (a, b, d, e), \ouralg outperforms vLLM and Continuum up to 2.43-3.56$\times$. For workflows exhibit stochastic tool execution time (c, f), \ouralg still achieves the best throughput performance.}
    \label{fig:eval_infer_result}
    \vspace{-4mm}
\end{figure*}

In this section, we evaluate \ouralg on diverse agentic workflows, including coding, routing, and scientific research agents, and RL rollout across multiple hardware configurations ranging from RTX5090 to H100 clusters. Furthermore, we conduct extensive ablation studies to breakdown the end-to-end system runtime and to describe the system's sensitivity to the scheduler's hyperparameters, $\Delta t$ and $f(t)$, in \autoref{app:ablation_study}.
\vspace{-6mm}
\subsection{Experimental setup}
\label{sec:experiments setup}
\paragraph{Benchmarks and workflows.} We evaluate \ouralg against diverse benchmarks and workloads:
\begin{enumerate}[itemsep=0.0pt,topsep=2pt,leftmargin=*]
    \item \textbf{Coding agent serving.} We deploy \textbf{OpenHands} and \textbf{mini-SWEAgent} on the SWEBench-Lite~\citep{jimenez2024swebenchlanguagemodelsresolve}. OpenHands represents a \textbf{heavy-initialization workflow} with an average disk footprint exceeding 10GB per sandbox, while mini-SWEAgent is a \textbf{lightweight workflow} with a minimal footprint (2GB per sandbox).
    
    \item \textbf{General agent serving.} We apply ToolOrchestra on HLE~\citep{phan2025humanitysexam} and OpenHands on ScienceAgentBench~\citep{chen2024scienceagentbenchrigorousassessmentlanguage}. These workloads involve variable latencies driven by external API calls and complex scientific simulations.
    \item \textbf{RL rollout.} We apply the same models, workflows, and samples for RL rollout on two 8$\times$ H100 nodes.
\end{enumerate}
\paragraph{Models and deployments.}
\vspace{-2mm}
We employ GLM-4.6 (355B) \citep{5team2025glm45agenticreasoningcoding} and Qwen-3 (235B)~\citep{yang2025qwen3technicalreport} using both OpenHands~\citep{wang2025openhandsopenplatformai} and mini-SWEAgent~\citep{yang2024sweagentagentcomputerinterfacesenable} frameworks. Models are quantized to FP8 with Tensor Parallelism (TP8) on 8$\times$H100 nodes. For ToolOrchestra~\citep{su2025toolorchestraelevatingintelligenceefficient}, we use Qwen3-8B with FP16 precision hosted on one RTX 5090. We deploy the LLM inference engine and Docker at different clusters. The LLM inference engine runs on GPU clusters hosting the models, while agent Docker environments are offloaded to a dedicated CPU cluster.



\vspace{-2mm}
\paragraph{\ouralg configuration.} We configure \ouralg with hyperparameters $\Delta t = 5$ and priority decay $f(t) = 2^{-t}$, defined in \autoref{sec:scheduling_policy}. vLLM is employed as our LLM inference engine. We use steps per minute as our throughput metric, where one step includes a reasoning and acting period of the workflow.
\vspace{-2mm}
\paragraph{Baseline techniques.}
We compare against state-of-the-art systems with different scheduling paradigms:


\begin{itemize}[itemsep=0.0pt,topsep=0pt,leftmargin=*]
    \item \textbf{vLLM (Inference):} A widely adopted, request-aware LLM inference engine that serves as a stateless baseline for inference performance, without incorporating any agent- or program-specific awareness.
    \item \textbf{Continuum (Inference):} The current SOTA system for multi-turn agentic workflows. It mitigates KV cache thrashing by predicting tool execution durations and pinning KV cache to HBM correspondingly.
    \item \textbf{vLLM + SGLang Gateway (Distributed Rollout):} The leading solution for large-scale distributed RL rollout. SGLang Gateway optimizes distributed inference by enhancing cross-node memory balancing and KV cache hit rates, making this combination a strong baseline for the distributed RL rollout setting.
\end{itemize}
\vspace{-2mm}
\subsection{Serving Evaluation Results}\label{sec:serving_eval}

\begin{figure*}[t]
  \centering
  \includegraphics[width=\textwidth]{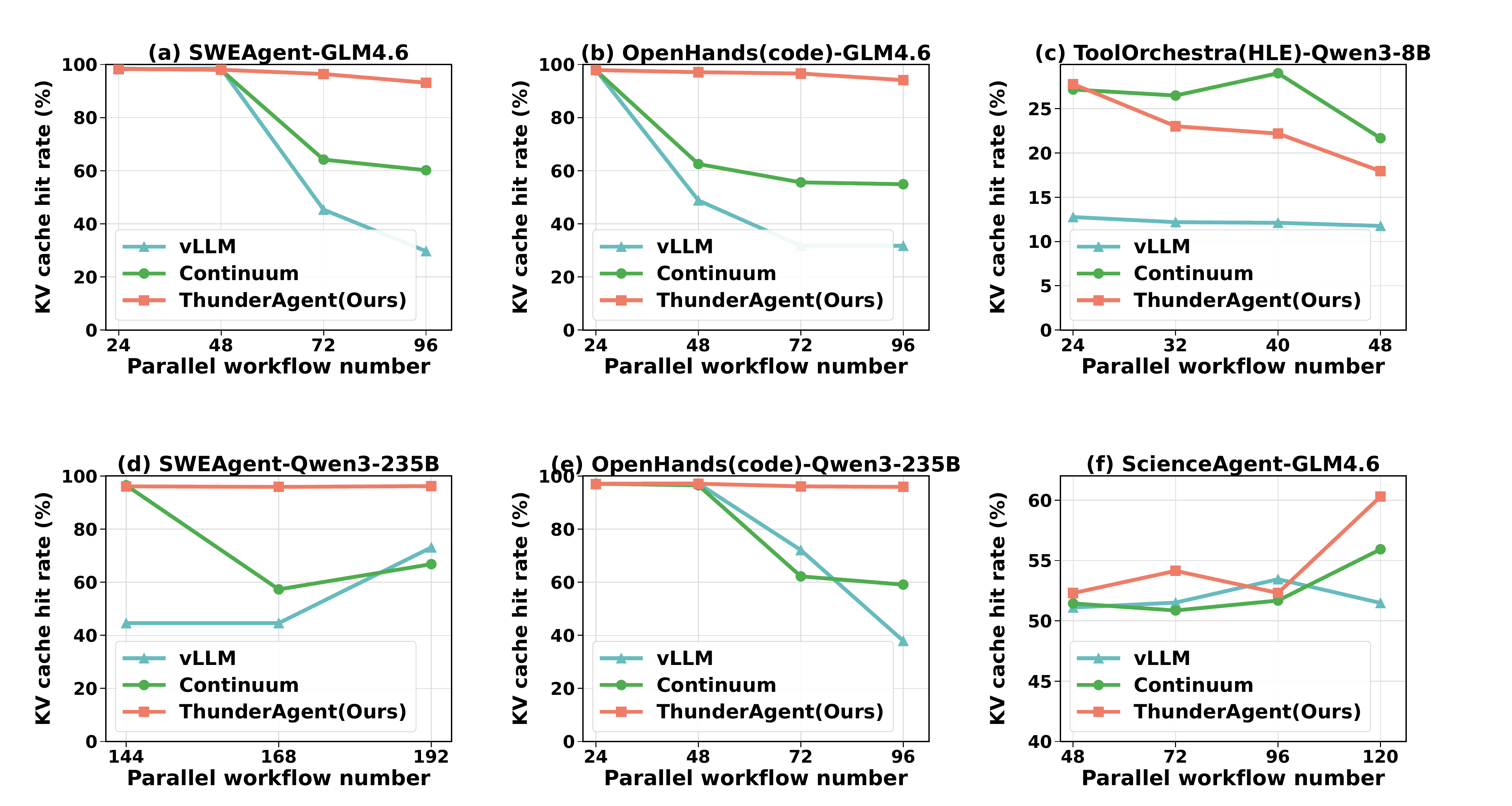}
  \caption{\textbf{KV Cache Hit Rate.} \ouralg achieves near-optimal KV hit rate with predictable tool call time (a, b, d, e), while dynamically trading KV cache hit rate for better compute utility when serving workflows with stochastic tool execution time (c, f).}
  \vspace{-1em}\label{fig:kv_hit_rate_data_statistics}
\end{figure*}

\paragraph{High throughput under high concurrency.} 
\autoref{fig:eval_infer_result} showcase that  \ouralg demonstrates superior throughput at high concurrency levels (e.g., 96 parallel programs), achieving a \textbf{1.48--3.58$\times$} speedup over vLLM and \textbf{1.17--3.31$\times$} speedup over Continuum across diverse base models and datasets.
This gain rises from our program-aware scheduler, which maintains a near-optimal KV cache hit rate ($\approx$100\% for Mini-SWE-Bench and OpenHands, see \autoref{fig:kv_hit_rate_data_statistics} a, b, d, e) and enables the asynchronous preparation of environments. In contrast, Continuum suffers from performance degradation under high concurrency. As shown in \autoref{fig:kv_hit_rate_data_statistics}, its KV cache hit rate drops significantly from $>$90\% to $\approx$60\%. This is because Continuum suffers from KV cache eviction among requests in different programs when no enough memory is available for ongoing requests' decoding. As a result, active programs compete for limited memory and trigger thrashing.

\paragraph{Robustness performance to high concurrency.}
\ouralg maintains maximum achievable throughput even as the parallel workflow number scales beyond the GPU memory limit. As shown in \autoref{fig:eval_infer_result}, \ouralg ensures that throughput remains stable with the number of parallel workflows, whereas baseline systems suffer from severe throughput collapse once the workload exceeds memory limits. In practice, determining the optimal parallel workflow number to maximize utilization with limited KV cache thrashing and caching cost is infeasible due to the stochastic nature of agent environments and tool execution durations. \ouralg addresses this by automatically adapting to the maximum available capacity without manual tuning, a capability critical for robust real-world deployments.

\paragraph{Robustness across deterministic and stochastic tool executions.}
\ouralg outperforms baselines not only in workflows with deterministic tool patterns (\autoref{fig:eval_infer_result} a, b, d, e) but also under highly stochastic conditions (\autoref{fig:eval_infer_result} c, f).
This comes from our dynamic program-aware waiting queue policy. vLLM's request-aware scheduler typically lacks reserved memory for acting programs, forcing frequent re-computation. Conversely, Continuum statically reserves memory for all paused programs and mispredicts the tool execution time. These lead to expensive $\text{Cost}_{\text{recompute}}$ or $\text{Cost}_{\text{caching}}$ during long, unpredictable tool calls.
\ouralg balances them via a time-decay function $f(t)$, which prioritizes retaining KV cache for programs with short tool calls while preemptively pausing programs with long tool execution time to prevent memory waste.
As shown in \autoref{fig:kv_hit_rate_data_statistics} (right), although \ouralg exhibits a lower KV cache hit rate than Continuum in stochastic settings, it achieves higher throughput by ensuring active GPU utilization.

\begin{figure}[t!]  
    \centering
    \begin{subfigure}{0.40\textwidth}
        \centering
        \includegraphics[width=0.99\textwidth]{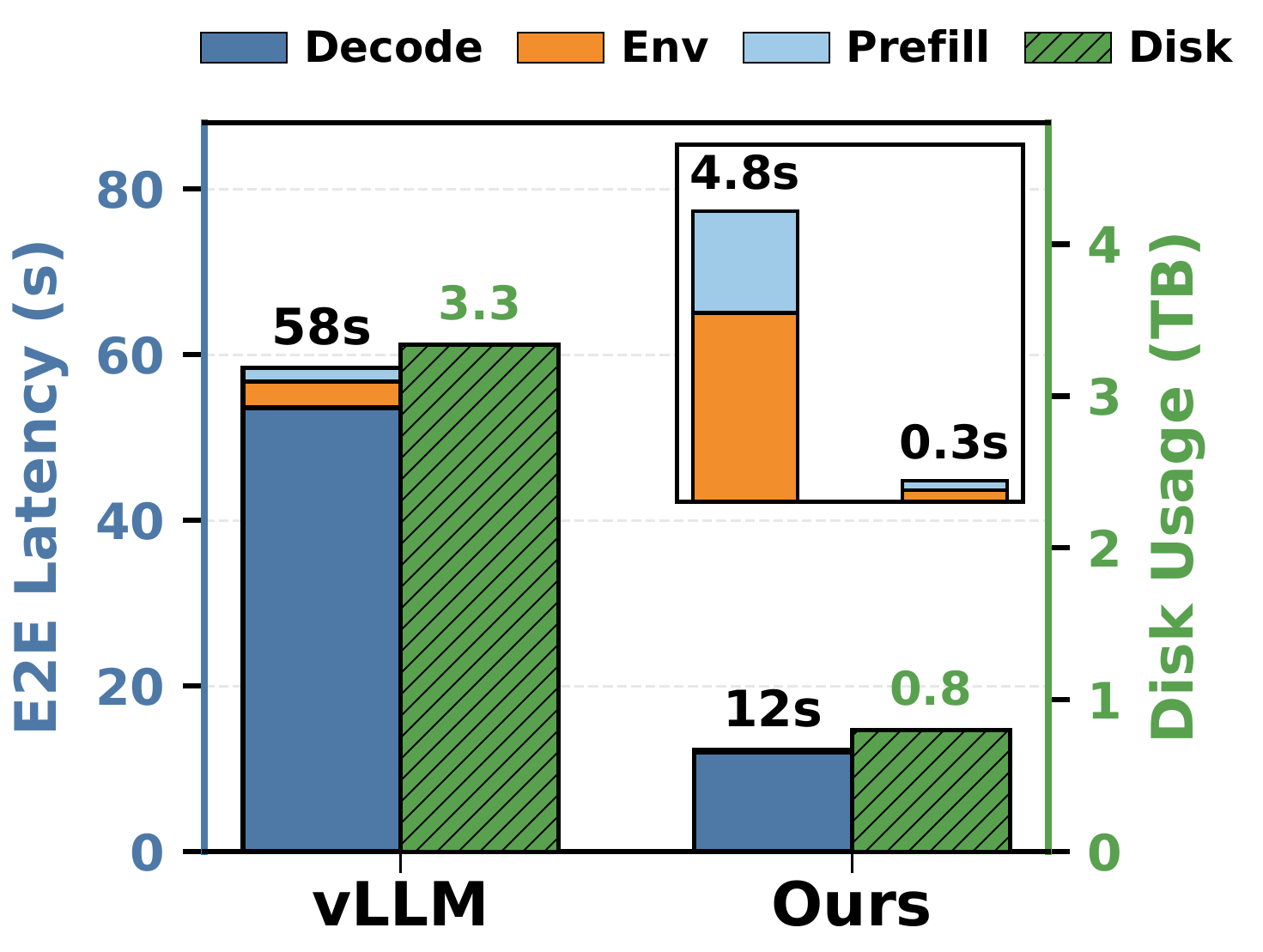}  
        \caption{Latency Breakdown}
        \label{fig:breakdown}
    \end{subfigure}
    \begin{subfigure}{0.33\textwidth}
        \centering
        \includegraphics[width=0.99\textwidth]{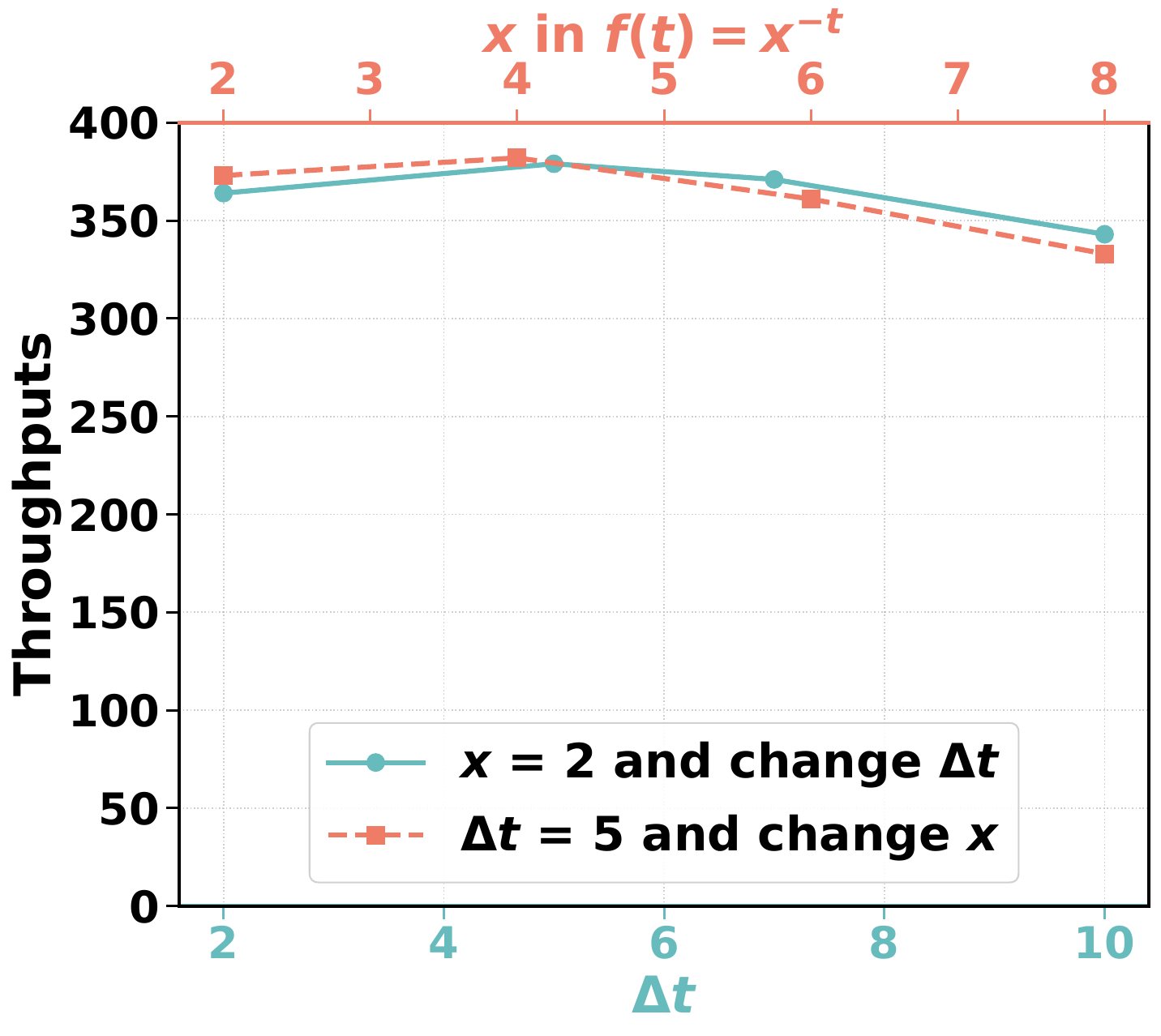}  
        \caption{Ablation of $\Delta t$ and f(t)}
        \label{fig:parameter_ablation}
    \end{subfigure}
    \caption{Ablation study of end-to-end latency breakdown and parameter sensitivity of \ouralg.}
        \label{fig:latency_ablation}
    \vspace{-2mm}
\end{figure}

\vspace{-3mm}
\subsection{Rollout Evaluation Results}

\begin{wraptable}{r}{ 0.65\columnwidth}
\vspace{-1.1em}
\centering
\caption{\small GLM-4.6 rollout ($N=144$) on 2$\times$H100 nodes.}
\vspace{-0.3em}
\label{tab:blocksize_ablation}
\resizebox{\linewidth}{!}{\setlength{\tabcolsep}{3mm}{
\begin{tabular}{llr}
\toprule
Workflow & Serving System & Throughput \\
\midrule
mini-SWEAgent & vLLM + Gateway & 375.4 \\
mini-SWEAgent & \ouralg & 671.8 (\textbf{1.79$\times$}) \\
OpenHands & vLLM + Gateway & 69.1 \\
OpenHands & \ouralg & 270.8 (\textbf{3.92$\times$}) \\
\bottomrule
\end{tabular}}}
\end{wraptable}

We evaluate RL rollout using GLM-4.6 on a two-node H100 cluster. \autoref{tab:blocksize_ablation} shows that \ouralg can maintain effective scalability, achieving a \textbf{1.79--3.92$\times$} throughput increase over the vLLM $+$ Gateway baseline, making it highly efficient for memory-intensive distributed RL workloads.

\vspace{-2mm}
\subsection{Ablation Study}\label{app:ablation_study}
\paragraph{End-to-end latency breakdown.}
\autoref{fig:breakdown} decomposes the average end-to-end latency for OpenHands rollouts. The throughput gain stems primarily from reductions in \textbf{prefill and decode latency}. Moreover, the tool resource management policy (\autoref{sec:tool_resource}) contributes approximately \textbf{10\%} to the latency improvement while providing \textbf{4.2$\times$} disk memory savings. Per-step end-to-end latency are further discussed in \autoref{app:e2e}.

\paragraph{Ablation on $\Delta t$ and f(t).} We study the sensitivity of detecting period $\Delta t$ and decaying function $f(t) = x^{-t}$. \autoref{fig:parameter_ablation}
 shows \ouralg offline serving mini-SWEAgent with GLM4.6 as base model on a single H100 node. We observe that \ouralg maintains high throughput under different parameter settings, demonstrating the robustness of our method. Further increasing $\Delta t$ might decrease the KV cache hit rate and thereby reduce throughput because thrashing might occur in the middle of detecting. Also, increasing $x$ in $f(t)$ allows more aggressive eviction of acting programs, which trade recomputation costs to reduce caching costs. This reduces throughput as acting programs 
with short tool execution time are prematurely evicted.


\vspace{-1mm}
\section{Conclusion}

We introduce \ouralg, a fast and simple agentic system built on a program-level abstraction that tracks metadata throughout the entire lifecycle of each agentic workflow. \ouralg leverages the program abstraction for runtime scheduling and resource management. Specifically, \ouralg dynamically schedules program execution across GPU nodes to mitigate KV cache thrashing and memory imbalance, while managing tool resources to prevent resource leakage. Experimental results showcase that \ouralg outperforms previous systems by \textbf{1.48--3.58}$\times$ for serving and \textbf{1.79--3.92}$\times$ for RL rollouts.

\newpage

\section*{Acknowledgements}
We are grateful to Together AI for making this work possible.
We thank Ben Athiwaratkun and Ce Zhang for assistance in developing the multi-backend scheduler. 
We thank Wenyi Hong and Luke Huang for helpful feedback and discussions during this work.

\section*{Impact Statement}

This paper presents work whose goal is to advance the field of Machine Learning Systems, specifically by optimizing the execution efficiency of agentic workflows. By significantly reducing the memory footprint and hardware requirements for maintaining agent inference and RL rollout, our method improves the \textbf{cost-efficiency} and \textbf{energy sustainability} of large-scale agent serving.

Our approach enables the training and inference of larger models and more complex environments on limited hardware resources. This efficiency helps companies and individuals to advanced agent research, allowing a broader range of institutions and practitioners to engage in high-fidelity simulations without necessitating prohibitive computational investments. While increasing system throughput could theoretically accelerate both positive and negative applications of autonomous agents, our work primarily targets infrastructure optimization, which is essential for the scalable and sustainable development of the field.

\newpage

\bibliography{main}
\bibliographystyle{main}
\appendix
\clearpage
\section{Extended Discussion}
\subsection{KV Cache Optimization}
\label{appendx: kv opt}
\paragraph{Multi-tiered KV cache management.}
To alleviate GPU memory pressure, systems such as Pensieve~\citep{Yu_2025}, Continuum~\cite{li2025continuumefficientrobustmultiturn}, Strata~\citep{xie2025stratahierarchicalcontextcaching}, and ShadowKV~\citep{sun2025shadowkvkvcacheshadows} exploit the hardware memory hierarchy, comprising GPU HBM, CPU DRAM, and NVMe SSD for KV cache management. These tiered caching mechanisms mitigate transient preemption by offloading inactive KV states to lower-tier storage and prefetching them back to GPU upon request resumption. However, the practical efficiency of these methods is fundamentally constrained by the inter-tier bandwidth between device and host memory. In high-frequency agentic workflows, the overhead of frequent swap-in and swap-out cycles often negates the benefits of multi-tier caching as shown in \autoref{sec:experiments}.



\paragraph{Distributed KV cache management.}
Distributed management of agentic states introduces significant complexity to KV cache eviction and preemption policies. While systems like BanaServe~\citep{he2025banaserveunifiedkvcache} and LMCache~\citep{liu2025lmcacheefficientkvcache} enable KV cache transfer across DP nodes, their performance in large-batch agentic serving and rollout is often constrained by limited interconnect bandwidth. The strong intra-program dependencies in agentic workflows necessitate frequent state transferring without program-level management, which can easily saturate the network during serving or rollout.

To bypass these bandwidth bottlenecks, standard inference systems like vLLM KV-aware router~\citep{vllm_kvaware_routing} and SGLang Model Gateway~\citep{zheng2024sglangefficientexecutionstructured} employ KV-aware routing policies that pin requests to specific nodes based on prefix locality or session ID. Similarly, Vortex~\citep{yuan2025vortexovercomingmemorycapacity} introduces session-aware prefetching to minimize cross-node data transfer latency. However, these approaches lack the capability to dynamically migrate active program states between DP nodes. This absence of workload transfer leads to severe memory utilization imbalance across the cluster, shown in \autoref{fig:dp unbalance}. Nodes hosting long-running agentic programs cannot offload states to idle peers, resulting in fragmented resource utilization and degraded aggregate throughput.

\subsection{Existing KV cache optimization methods}
\label{sec: pd+offload}
\paragraph{KV cache offloading.}
We investigated KV cache offloading by using LMcache~\cite{liu2025lmcacheefficientkvcache} as a potential remedy for capacity constraints. While offloading theoritically extends effective memory space by utilizing CPU or SSD storage, our implementation with vLLM + LMcache reveals a critical bottleneck: the PCIe bandwidth is insufficient to sustain the high-frequency context switching and large-volume data transfers inherent to agentic workloads. As demonstrated in \autoref{fig:lmcache}, when serving the GLM-4.6 model~\cite{5team2025glm45agenticreasoningcoding} with the mini-SWEAgent framework~\cite{yang2024sweagentagentcomputerinterfacesenable}, the latency penalty from frequent swap-in and swap-out operations negates the memory capacity benefits, resulting in severe throughput degradation under heavy agentic workloads.

\paragraph{Prefill-Decode (PD) disaggregation.}
We also explored PD disaggregation~\cite{zhong2024distservedisaggregatingprefilldecoding}, a standard optimization for chatbot serving by isolating the decoding phase from prefill interference. However, when applied to agentic workloads characterized by continuous context growth, we observe that PD disaggregation \textbf{\textit{exacerbates}} thrashing. By partitioning the cluster into prefill-only and decode-only nodes, the effective HBM pool available for handling prefill is significantly smaller than that in a unified architecture. This memory fragmentation causes the system to hit capacity limits and trigger thrashing at much lower concurrency levels, as shown in \autoref{fig:throughput_pd}. These results demonstrate that generic architectural optimizations cannot substitute for a program-centric scheduler that actively manages the working set.

\begin{figure}[H]
    \centering
    \begin{subfigure}[b]{0.49\linewidth} 
        \centering
        \includegraphics[width=0.99\linewidth]{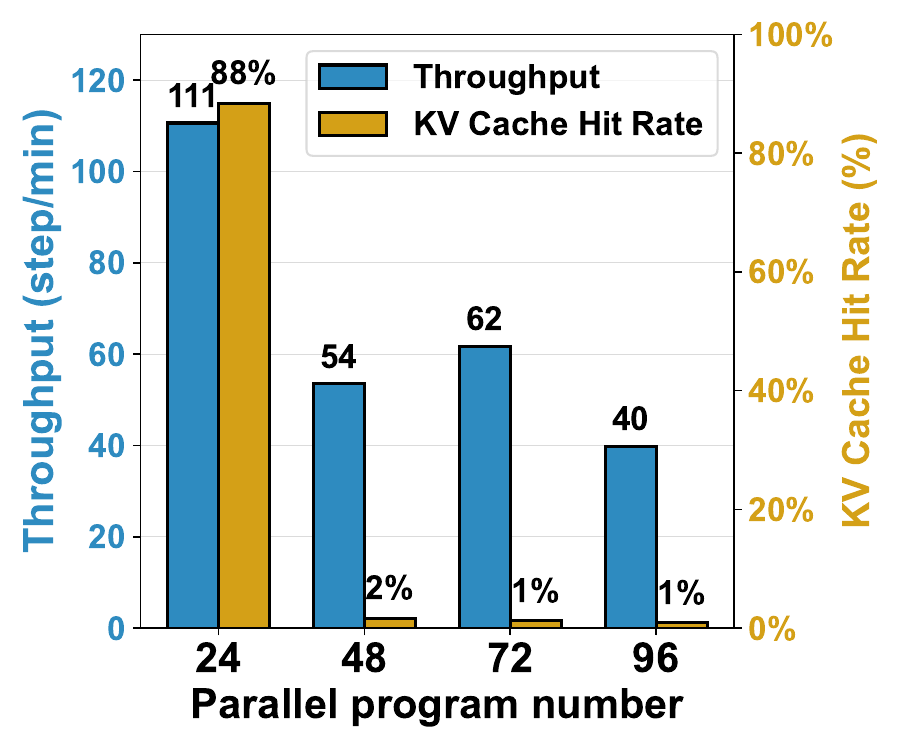}
        \caption{KV Cache Hit Rate with KV Cache Offloading}
        \label{fig:lmcache}
    \end{subfigure}
    \hfill 
    \begin{subfigure}[b]{0.49\linewidth}
        \centering
        \includegraphics[width=0.99\linewidth]{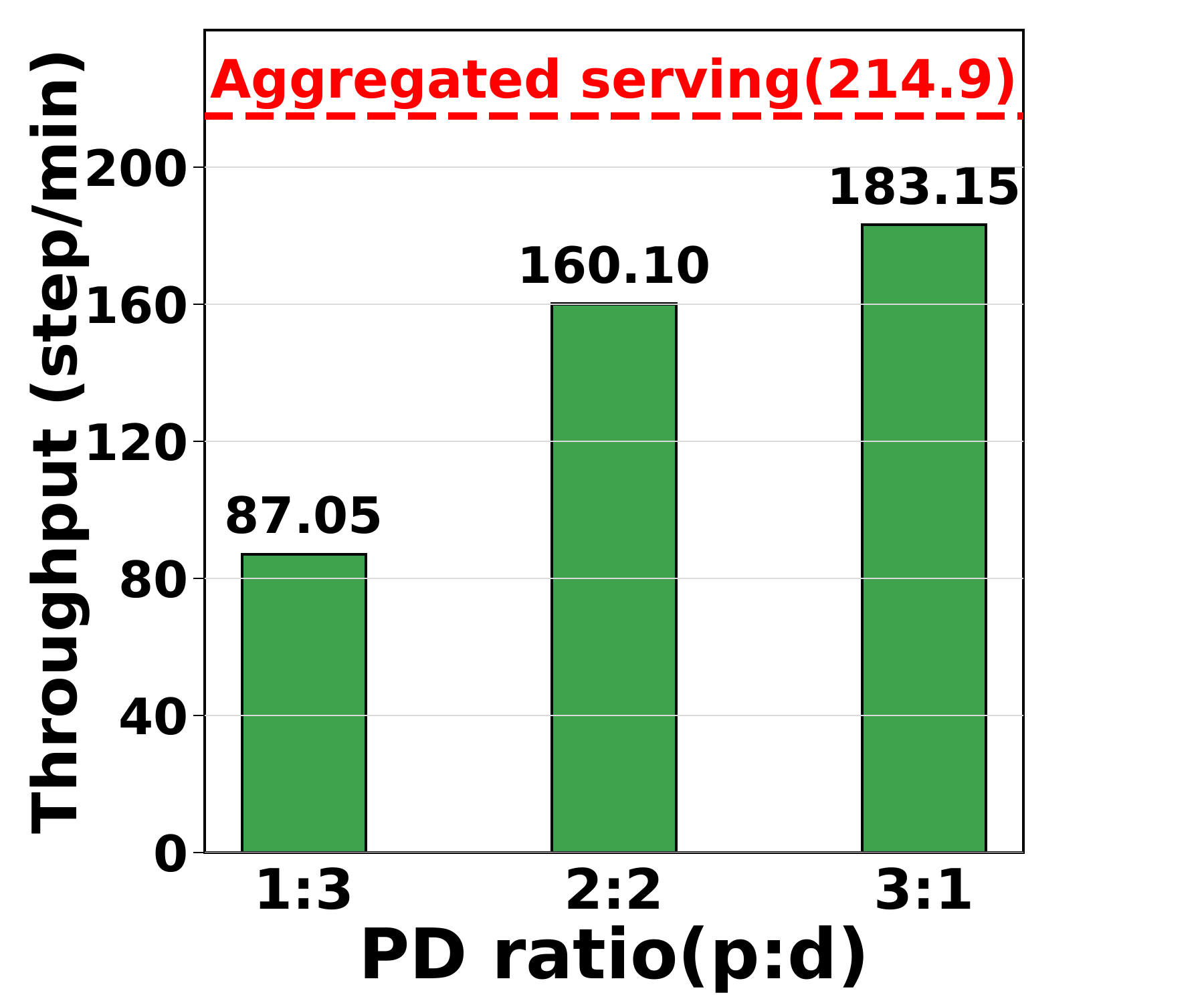}
        \caption{Throughput v.s. Prefill-Only/Decode-Only Node Ratio}
        \label{fig:throughput_pd}
    \end{subfigure}
    
    \caption{Ablation study on KV cache offloading and Prefill-Decode (PD) disaggregation}
    \label{fig:both_figures}
\end{figure}

\subsection{Scaling up agentic workflows}
\label{sec:rl_scale}
\textbf{Heterogeneous resource allocation and scheduling.} 
To orchestrate multi-turn agent-environment interactions at scale, recent systems such as MegaFlow~\cite{zhang2026megaflow}, RollArt~\cite{gao2025rollart,wang2025let}, AgentRL~\cite{zhang2025agentrl}, and VerlTool~\cite{jiang2025verltool} decouple model inference from environment execution. While these frameworks effectively scale environment concurrency via specialized services, they exhibit the inherent limitations of coarse-grained disaggregation. By treating the inference engine and tool executor as isolated black boxes, these systems lack unified resource management and are unable to coordinate KV cache lifecycles with the environment execution. Without fine-grained scheduling at program-level, disaggregation-based approaches waste KV cache reuse potential in agentic workloads, yielding sub-optimal throughput.

\paragraph{Scalability of \ouralg as a Routing Policy}
For multi-node deployment, ThunderAgent replaces session-based static node-pinning with a global waiting queue. When a paused workflow is ready to resume, ThunderAgent routes its requests to the node with the most available capacity. Such routing strategy balances the tradeoff between KV cache locality and multi-node workload balance, achieving both efficient cache utilization and even load distribution across the cluster.

The main paper's distributed-rollout result (\autoref{tab:blocksize_ablation}) uses the standard $2$-node setting; we now extend the experiment to up to $8$ nodes ($64{\times}$H100) on the same GLM-4.6 + mini-SWEAgent workload, with concurrency scaled proportionally ($72$ programs per node). The baseline is the \emph{Cache-Aware Routing Policy} in SGLang Model Gateway\footnote{\url{https://docs.sglang.io/docs/advanced_features/sgl_model_gateway\#cache-aware-policy-tuning}}, which pins requests to specific backend replicas based on prefix locality but cannot redistribute paused programs across nodes.

\autoref{fig:multinode_scaling} compares throughput against this baseline and against an idealized linear-scaling reference. \ouralg tracks the linear-scaling line closely from $2$ to $8$ nodes, whereas the Cache-Aware Routing Policy quickly falls off because cross-node memory imbalance grows with cluster size and a locality-only router cannot rebalance the paused-program pool (cf.\ \autoref{app:abl_global_queue}). Concretely, \ouralg's speedup over the SGLang baseline \emph{widens} from $1.79\times$ at $2$ nodes to $2.39\times$ at $8$ nodes. We also instrumented the per-tick communication delay of \ouralg's global scheduler: it grows sub-linearly from $13.5$~ms at $2$ nodes to $22.7$~ms at $8$ nodes (i.e., $1.68\times$ for a $4\times$ increase in nodes) and remains well below per-step inference latency, which is on the order of seconds. The global scheduler is therefore not a bottleneck for scaling.

\begin{figure}[H]
\centering
\includegraphics[width=0.7\linewidth]{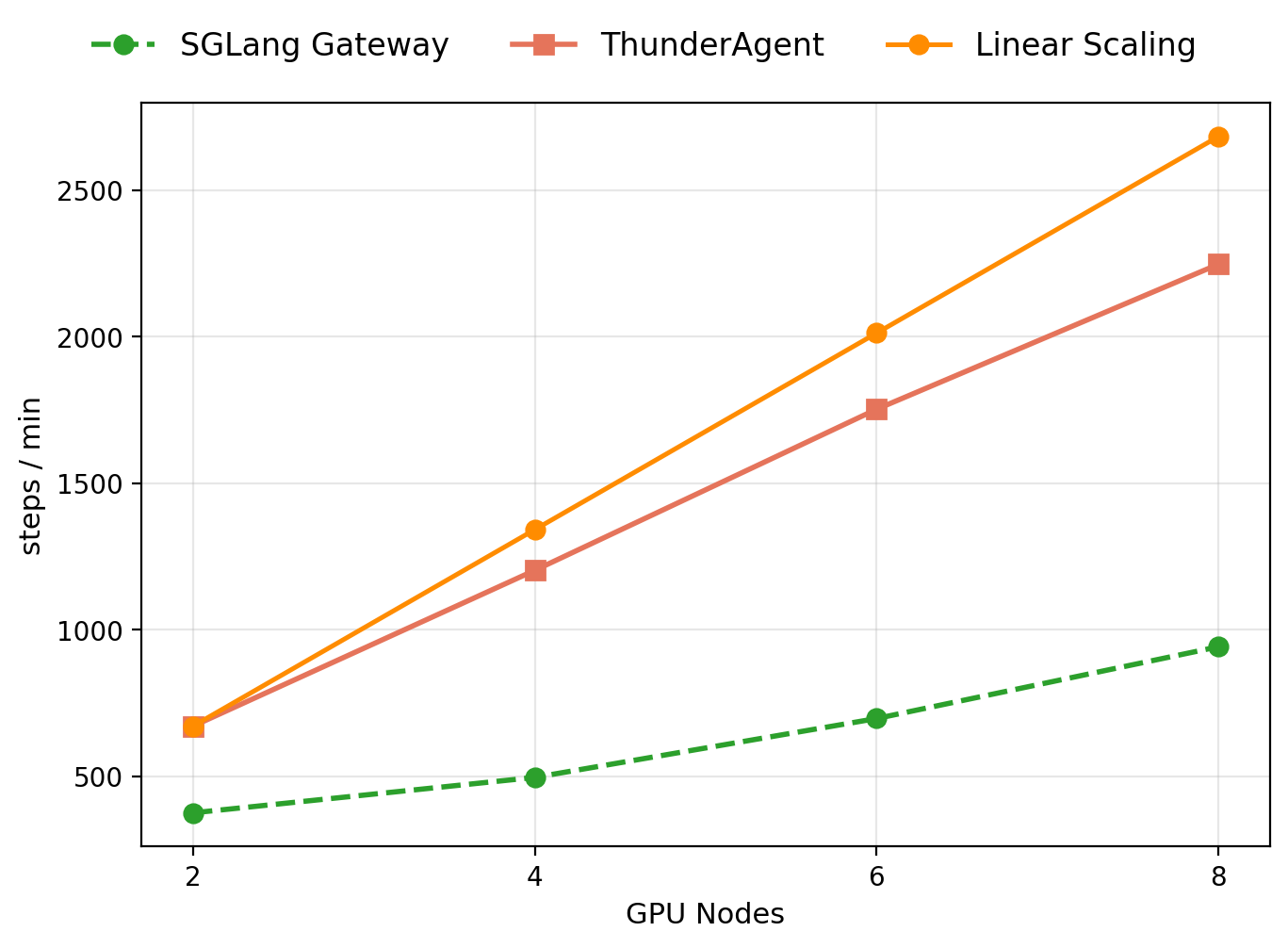}
\caption{\textbf{Scaling up~\ouralg to $8$ H100 nodes}, GLM-4.6 + mini-SWEAgent ($72$ concurrent programs per node).~\ouralg tracks the ideal linear-scaling reference, while the Cache-Aware Routing Policy in SGLang Model Gateway falls progressively further behind as the cluster grows.}
\label{fig:multinode_scaling}
\end{figure}

\subsection{Compatibility of~\ouralg with KV-cache offloading}
\label{app:offload_compatibility}
\ouralg is orthogonal to KV-cache offloading and remains effective when offloading is enabled. To demonstrate this, we integrate \ouralg into \textbf{SGLang HiCache}~\cite{xie2025stratahierarchicalcontextcaching} with the host-memory ratio set to $2.0$, serving \textbf{MiniMax M2.5} on \textbf{R2E-Gym} across $8{\times}$H100 GPUs. We sweep serving batch size from $64$ to $192$ and compare against the same SGLang + HiCache stack \emph{without} \ouralg as the baseline.

\textbf{Offloading alone is insufficient at high concurrency.} \autoref{fig:offload_compat} shows that the baseline keeps a high throughput up to batch size $96$, where the combined host + device capacity still accommodates the working set. As we push to batch sizes $160$ and $192$, however, the baseline's combined capacity is exhausted: P50 latency jumps from $\sim\!10$~s to $\sim\!30$~s at batch $160$ and to $\sim\!70$~s at batch $192$, while throughput \emph{decreases} despite the larger batch.

\textbf{\ouralg prevents capacity collapse and improves throughput at every batch size.} By temporarily pausing short programs to keep the live working set within available capacity, \ouralg holds both P50 and P90 latency essentially flat at $\sim\!10$~s and $\sim\!20$~s respectively across the entire sweep, while throughput grows monotonically with batch size. Even at small batch sizes where both methods achieve high cache hit via PCIe swapping, \ouralg still improves throughput and latency by selectively pausing acting programs.

\begin{figure}[H]
\centering
\includegraphics[width=0.7\linewidth]{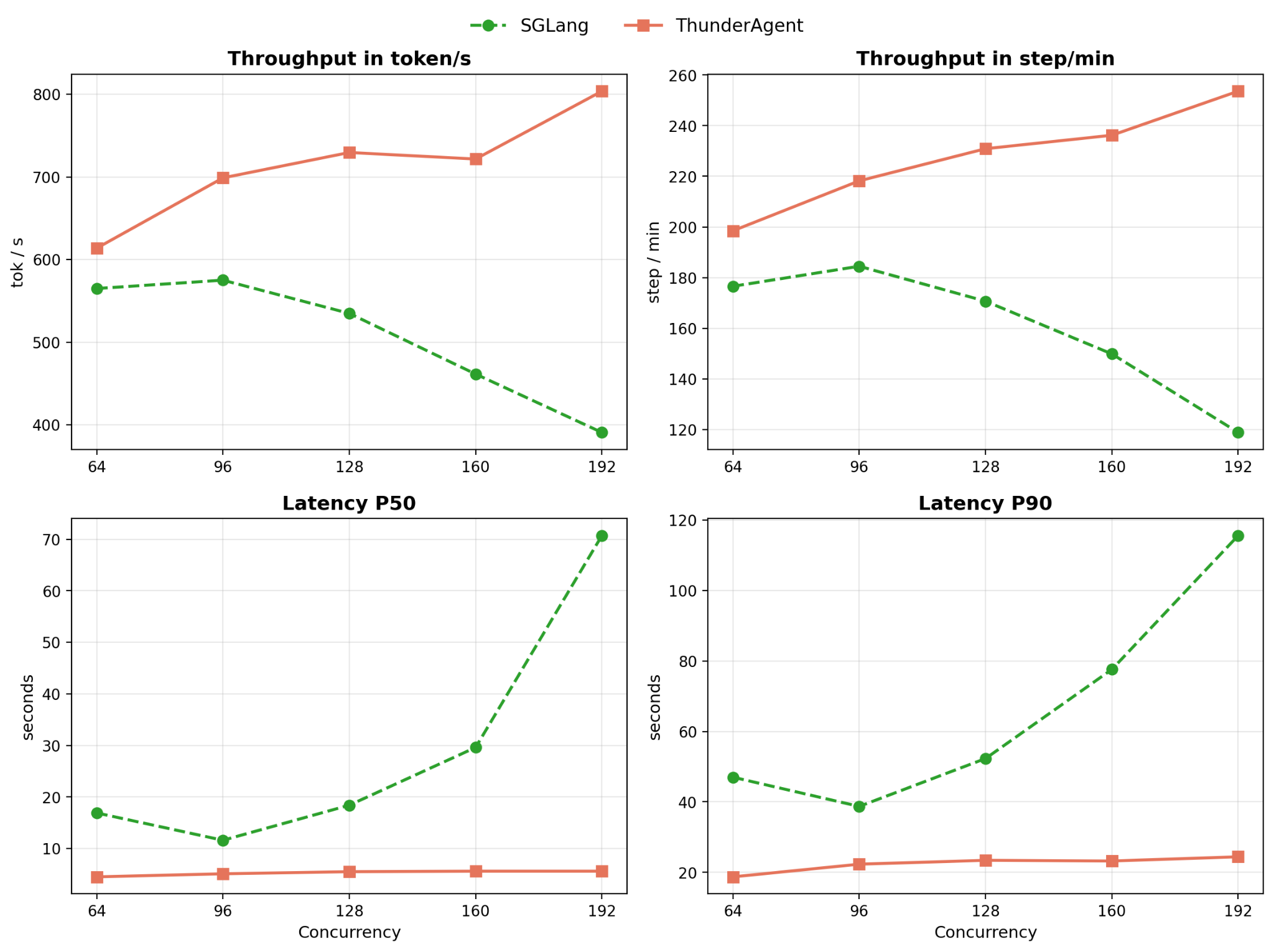}
\caption{{\ouralg with SGLang HiCache}}
\label{fig:offload_compat}
\end{figure}

\subsection{Portability of~\ouralg across Hardware Generations}
\label{sec:hw_portability_a100}
We test~\ouralg on hardware with a lower compute-to-bandwidth ratio than H100. \autoref{tab:hw_ratio} reports the relevant ratio for the two GPU tiers we use in the paper: the A100 sits at $153$~GFLOPS/GB, roughly half H100's $295$.

\autoref{fig:a100_ta} reports throughput on $8{\times}$A100 with the same workloads as \autoref{fig:eval_infer_result}(a, b). At low concurrency (24), \ouralg matches vLLM within noise on both pipelines, because the working set fits comfortably in HBM and thrashing is rare. As concurrency rises to 48 and 72, the picture flips: \ouralg widens the gap to $1.71$--$2.08\times$ on mini-SWEAgent and $1.16$--$1.62\times$ on OpenHands, while vLLM's throughput \emph{decreases} from concurrency 48 onward as it begins to thrash.

\begin{table}[H]
\centering
\small
\setlength{\tabcolsep}{8pt}
\renewcommand{\arraystretch}{1.1}
\begin{tabular}{lrrr}
\toprule
\textbf{GPU} & \textbf{Compute (TFLOPs)} & \textbf{Bandwidth (GB/s)} & \textbf{Ratio} \\
\midrule
H100 & 989 & 3350 & 295.2 \\
A100 & 312 & 2039 & 153.0 \\
\bottomrule
\end{tabular}
\vspace{2pt}
\caption{Compute-to-bandwidth ratio across hardware tiers. We report FP16 Tensor Core (without sparsity) compute performance in TFLOPs and Global-to-L2 memory bandwidth in GB/s.}
\label{tab:hw_ratio}
\end{table}

\begin{figure}[H]
\centering
\includegraphics[width=0.7\linewidth]{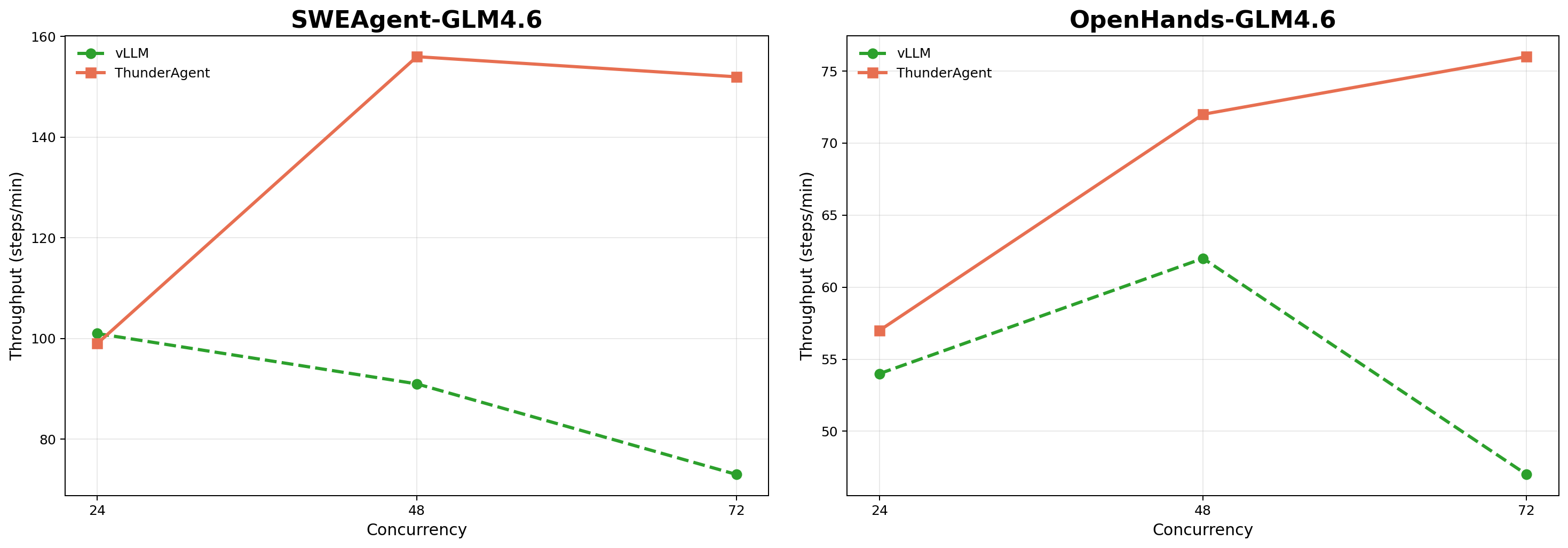}
\caption{{\ouralg on A100 GPUs}}
\label{fig:a100_ta}
\end{figure}

\subsection{Adoption of~\ouralg in Open-Source Frameworks}
\label{app:adoption}
Beyond our own evaluation, \ouralg has been adopted by open-source RL and serving frameworks to accelerate agentic rollout and serving, demonstrating that it can be integrated into existing stacks with small interface changes.

\paragraph{SkyRL (RL rollout).} ThunderAgent is integrated into upstream SkyRL~\cite{cao2025skyrl}'s main repository under \texttt{examples/train/thunder\_agent} with an example training recipe at:~\url{https://github.com/NovaSky-AI/SkyRL/tree/main/examples/train/thunder_agent}.
This recipe uses \ouralg as a rollout gateway for SkyRL training. Each agent trajectory is exposed to \ouralg as a program through a \texttt{program\_id}, and the program is explicitly released after termination, matching the minimal interface described in \autoref{app:easy_intergrate}. The released recipe trains Qwen3-32B on R2EGym with SkyRL's fully asynchronous framework and achieves a $3.01\times$ wall-clock speedup over the baseline, demonstrating that~\ouralg can be adopted in an existing agentic RL stack with minimal interface changes.

\paragraph{NVIDIA Dynamo (serving).} As part of Dynamo 2.0\footnote{\url{https://github.com/ai-dynamo/dynamo/issues/9208}},~\ouralg is being integrated into NVIDIA Dynamo\footnote{\url{https://github.com/ai-dynamo/dynamo/pull/9448}}, a datacenter-scale distributed inference serving framework, to improve throughput for agentic inference workloads. The program abstraction proposed in~\ouralg is standardized as part of the~\texttt{nvext.agent\_context}\footnote{\url{https://github.com/ai-dynamo/dynamo/pull/8789}} serving protocol,  operating as a first-class scheduling unit in Dynamo.

\newpage

\section{System Portability and Interface Abstraction.}
\label{app:easy_intergrate}
\subsection{Middleware Architecture and Unified Interfaces.}
\noindent \ouralg serves as \textbf{a program-aware runtime layer} that mediates between agent control flow and backend inference engines via a program-level abstraction. The scheduler controls program state transitions based on the abstracted \texttt{ProgramState} (see Table~\ref{tab:program-state}) together with the backend cache capacity view (see Table~\ref{tab:backend-state}). Meanwhile, each program binds only to the endpoint and does not depend on the concrete backend implementation.

\subsection{Why program ID matters.}

While the standard \textbf{session ID} serves as a routing label, \textbf{the program ID is used by our system to check the workflow metadata.} This visibility is critical: it allows the scheduler to distinguish valid tool-wait times from idle sessions, enabling smart preemption strategies that session-based baselines cannot support.

\begin{figure}[!htbp]
\centering

\begin{minipage}[t]{0.485\linewidth}
\centering
\textbf{Only inference backend (e.g., vLLM/SGLang)}\\[-2pt]
\begin{tcolorbox}[
  colback=black!3,
  colframe=black!35,
  boxrule=0.4pt,
  arc=1.2pt,
  left=4pt,right=4pt,top=3pt,bottom=3pt
]
\begin{lstlisting}[style=req]
(*@\textcolor{blue!70!black}{\# 1) LLM request}@*)
chat_completion(model_id, messages, extrabody)

(*@\textcolor{blue!70!black}{\# 2) Tool execution}@*)
run_tool(command, sandbox)

(*@\textcolor{blue!70!black}{\# 3) Program end}@*)
# (no explicit release)
\end{lstlisting}
\end{tcolorbox}
\end{minipage}
\hfill
\begin{minipage}[t]{0.485\linewidth}
\centering
\textbf{With ThunderAgent}\\[-2pt]
\begin{tcolorbox}[
  colback=black!3,
  colframe=black!35,
  boxrule=0.4pt,
  arc=1.2pt,
  left=4pt,right=4pt,top=3pt,bottom=3pt
]
\begin{lstlisting}[style=req]
(*@\textcolor{orange!80!black}{\# 1) LLM request}@*)
extrabody[(*@\textbf{"program\_id"}@*)] = "PID"
chat_completion(model_id, messages, extrabody)

(*@\textcolor{orange!80!black}{\# 2) Tool execution}@*)
run_tool(command, sandbox, (*@\textbf{program\_id="PID"}@*))

(*@\textcolor{orange!80!black}{\# 3) Program end (explicit release)}@*)
POST /programs/release
{ (*@\textbf{"program\_id"}@*): "PID" }
\end{lstlisting}
\end{tcolorbox}
\end{minipage}

\caption{\textbf{Only three changes are required to use the ThunderAgent.}}
\label{fig:router-minimal-changes}
\end{figure}

\subsection{Low-overhead adoption of the ThunderAgent.}
Figure~\ref{fig:router-minimal-changes} shows that adopting ThunderAgent \textbf{only requires} attaching \texttt{program\_id} to requests (for both LLM inference and tool execution) and sending an explicit release signal with \texttt{program\_id} when a program ends.
The \texttt{program\_id} tags each request with its own program instance for scheduling, while the
release signal allows ThunderAgent to reclaim per-program resources after termination.
\textbf{All other request fields and the OpenAI-style API surface remain unchanged.}


\begin{table}[!htbp]
\centering
\footnotesize
\setlength{\tabcolsep}{3pt}
\renewcommand{\arraystretch}{1.05}

\begin{subtable}[t]{0.56\linewidth}
\centering
\begin{tabular}{@{} l l p{0.54\linewidth} @{}}
\toprule
\textbf{Field} & \textbf{Type} & \textbf{Meaning} \\
\midrule
\multicolumn{3}{@{}l}{\textbf{ProgramState}} \\
\addlinespace[2pt]
\texttt{status}        & ProgramStatus & Current lifecycle state. \\
\texttt{backend\_url}  & str           & Assigned backend endpoint. \\
\texttt{step\_count}   & int           & Executed steps so far. \\
\texttt{total\_tokens} & int           & Total tokens over full history. \\
\bottomrule
\end{tabular}
\subcaption{ProgramState fields.}
\label{tab:program-state}
\end{subtable}
\hfill
\begin{subtable}[t]{0.42\linewidth}
\centering
\begin{tabular}{@{} >{\raggedright\arraybackslash}p{0.34\linewidth}
                @{\hspace{10pt}}
                >{\raggedright\arraybackslash}p{0.56\linewidth} @{}}
\toprule
\textbf{Status} & \textbf{Meaning} \\
\midrule
\multicolumn{2}{@{}l}{\textbf{ProgramStatus}} \\
\addlinespace[2pt]
\texttt{REASONING} & On-GPU inference. \\
\texttt{ACTING}    & Off-GPU tool exec. \\
\texttt{PAUSED}    & In global paused waiting set. \\
\texttt{STOPPED}   & Released; resources reclaimed. \\
\bottomrule
\end{tabular}
\subcaption{ProgramStatus semantics.}
\label{tab:program-status}
\end{subtable}

\caption{Program state and status definitions.}
\label{tab:program-state-status}
\end{table}

\begin{table}[!htbp]
\centering
\footnotesize
\setlength{\tabcolsep}{3pt}
\renewcommand{\arraystretch}{1.05}

\begin{tabular}{@{} l l p{6.2cm} @{}}
\toprule
\textbf{Field} & \textbf{Type} & \textbf{Meaning} \\
\midrule
\multicolumn{3}{@{}l}{\textbf{BackendState}} \\
\addlinespace[2pt]
\texttt{url} & str & Backend endpoint. \\
\texttt{healthy} & bool & Health flag for scheduling. \\
\texttt{cache\_config} & Optional[CacheConfig] & Static cache configuration (fetched at startup). \\
\texttt{active\_program\_tokens} & int & Active token footprint on this backend. \\
\bottomrule
\end{tabular}

\vspace{6pt} 
\caption{Key fields of BackendState.}
\label{tab:backend-state}

\end{table}

\newpage
\section{Tool execution time variability.}
\label{app:tool_var}

\noindent\textbf{Practical agent tool calls are hard to characterize and often unpredictable.} In some code-centric settings (e.g., serving SWE-Bench~\citep{jimenez2024swebenchlanguagemodelsresolve} with SWE-agent~\citep{yang2024sweagentagentcomputerinterfacesenable} or OpenHands~\citep{wang2025openhandsopenplatformai}), agents primarily invoke local, lightweight tools, and tool latency is relatively stable with low variance. However, in broader and more realistic scenarios, e.g., serving HLE~\citep{phan2025humanitysexam} with ToolOrchestra~\citep{su2025toolorchestraelevatingintelligenceefficient}, the workload relies more heavily on remote-service tools (Table~\ref{tab:tool-buckets}), making tool execution time volatile and difficult to predict. This volatility largely stems from factors external to the agent runtime, such as network jitter, backend load and queuing delays, and rate limiting, which can vary across requests and over time. 

We empirically confirm this behavior in Figure~\ref{fig:tool_time}. For remote-service tools (and some execution tools), the gap between the median and tail quantiles is large: p95 and p99 are substantially higher than the median, and the tail can extend to tens or even hundreds of seconds. This suggests that tool latency in these settings lacks a stable central tendency; instead, heavy-tailed behavior dominates, \textbf{making tool latency prediction intrinsically brittle in practice.}

Given the unpredictability of tool execution, underestimation wastes pinned cache capacity while still triggering premature KV eviction, causing thrashing upon resume. Overestimation, in contrast, may lead to unnecessary eviction of programs' KV that should have remained pinned. Even if tool runtimes were perfectly predictable, existing methods such as continuum~\cite{li2025continuumefficientrobustmultiturn} still decide whether to keep the KV cache pinned using a static, threshold-based rule. In contrast, ThunderAgent builds \textbf{a complete cost-modeling framework} and \textbf{dynamically trades off} $\text{Cost}_{\text{recompute}}$ and $\text{Cost}_{\text{caching}}$.

\begin{table}[!htbp]
\centering
\small
\setlength{\tabcolsep}{8pt}
\begin{tabular}{l l l}
\toprule
\textbf{Tool bucket} & \textbf{Role} & \textbf{Primary variability source} \\
\midrule
\texttt{HLE-search} & Retrieve evidence & Remote service(Network latency/Rate limits) \\
\texttt{HLE-enhance-reasoning} & Model-as-a-tool call & Remote service \\
\texttt{HLE-answer} & Final generation & Local LLM inference \\
\midrule
\texttt{SAB-execute\_bash} & Shell execution & Sandbox and I/O \\
\texttt{SAB-execute\_ipython\_cell} & Python cell execution & Program runtime \\
\texttt{SAB-str\_replace\_editor} & File edit & Local filesystem \\
\texttt{SAB-task\_tracker} & Task state tracking & Local filesystem \\
\bottomrule
\end{tabular}
\vspace{6pt}
\caption{Tool buckets.}
\label{tab:tool-buckets}
\end{table}

\begin{figure*}[htbp]
  \centering

  \begin{subfigure}[t]{0.32\textwidth}
    \centering
    \includegraphics[width=0.8\linewidth ]{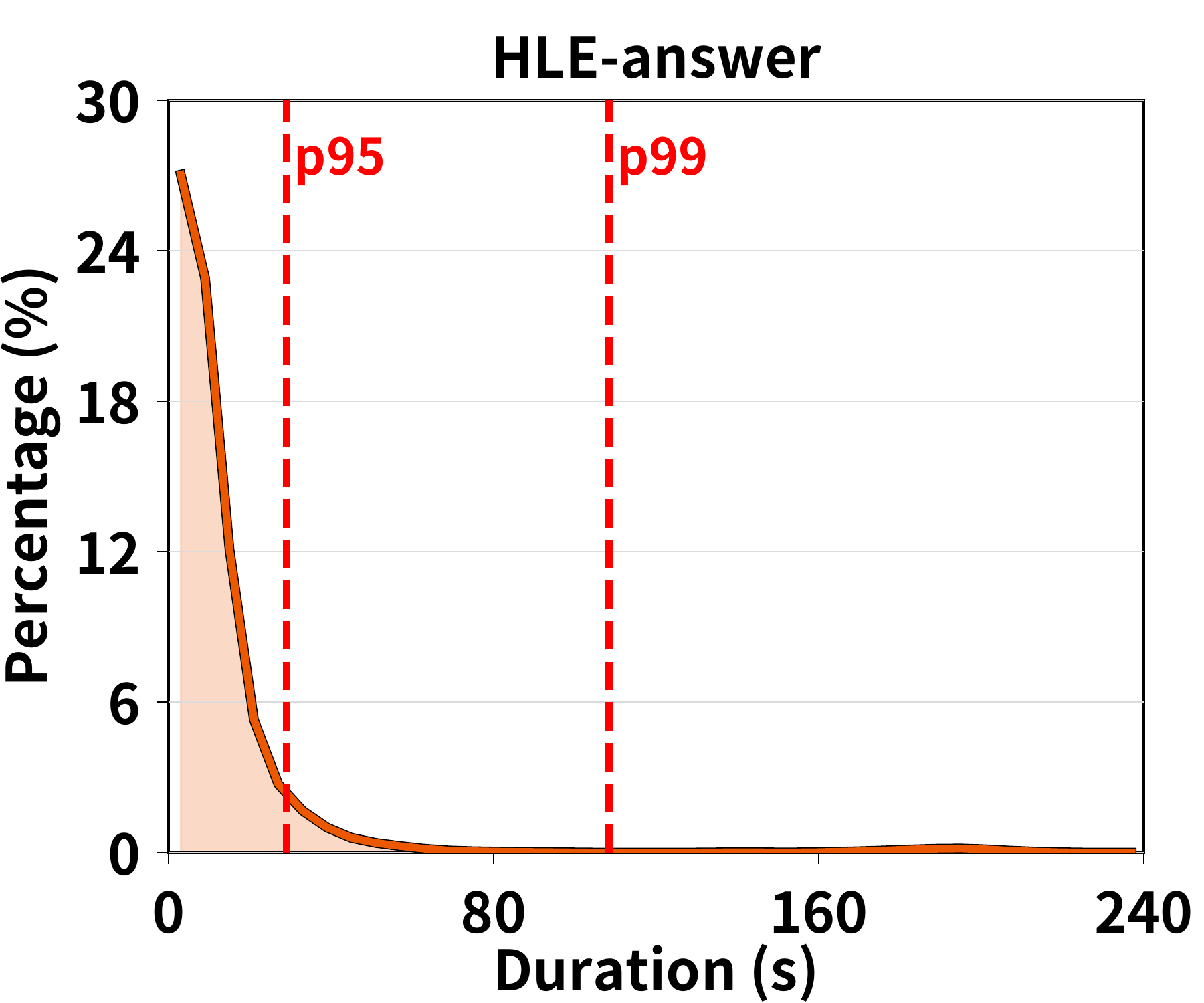} 
    \label{fig:hle_answer_latency}
  \end{subfigure}\hfill
  \begin{subfigure}[t]{0.32\textwidth}
    \centering
    \includegraphics[width=0.8\linewidth ]{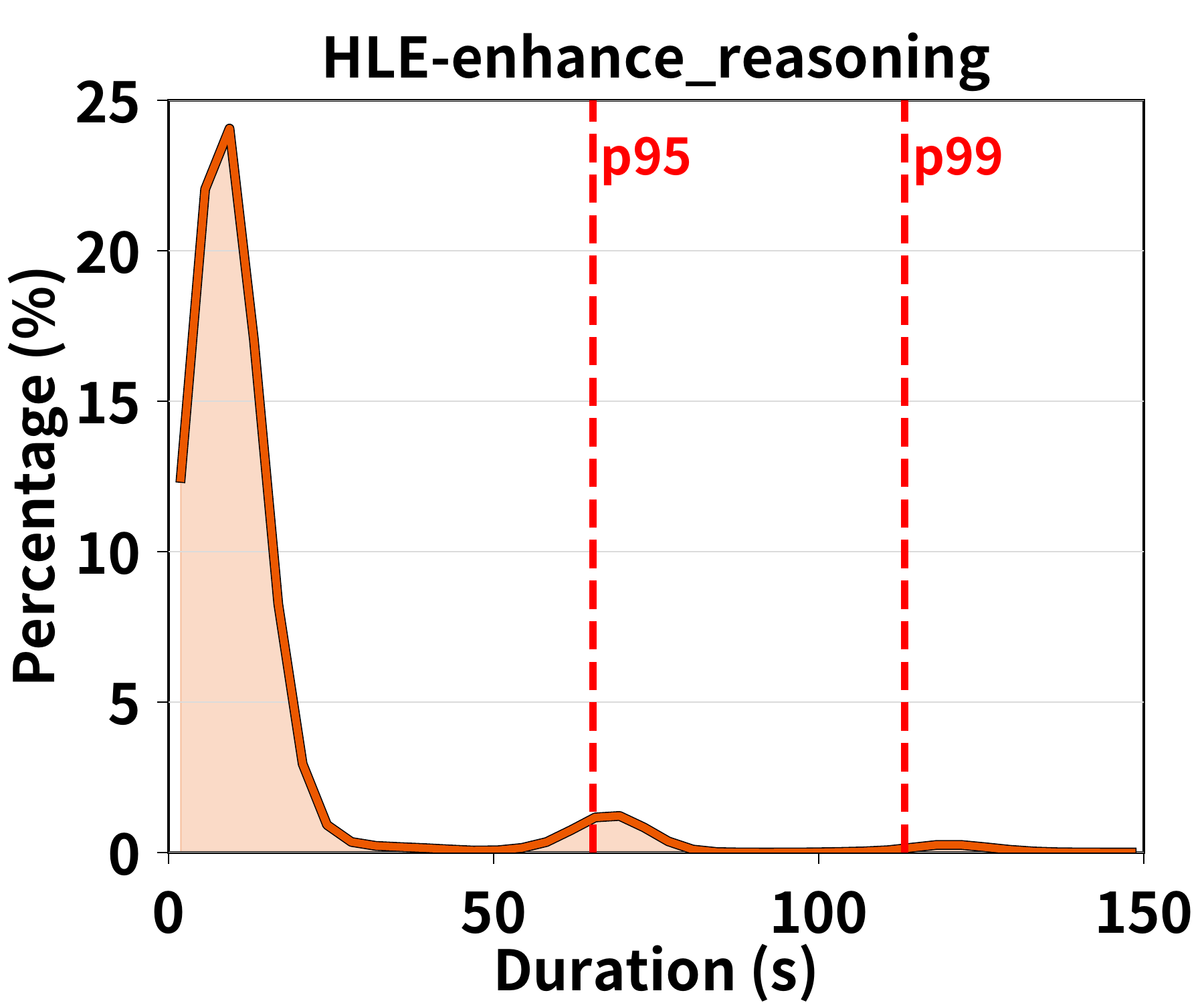} 
    \label{fig:hle_enhance_reasoning_latency}
  \end{subfigure}\hfill
  \begin{subfigure}[t]{0.32\textwidth}
    \centering
    \includegraphics[width=0.8\linewidth ]{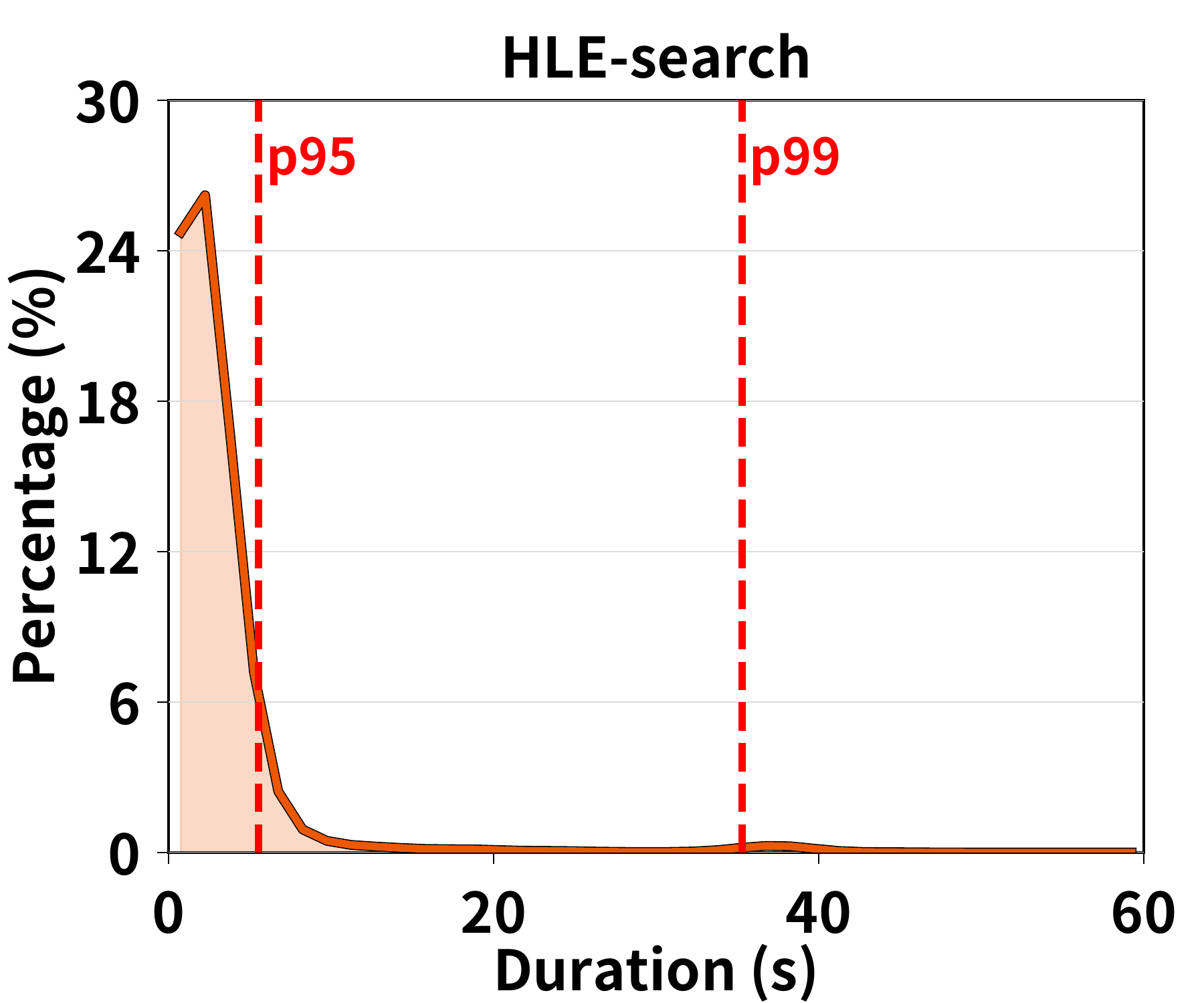} 
    \label{fig:hle_search_latency}
  \end{subfigure}

  \vspace{2mm}

  \begin{subfigure}[t]{0.32\textwidth}
    \centering
    \includegraphics[width=0.8\linewidth ]{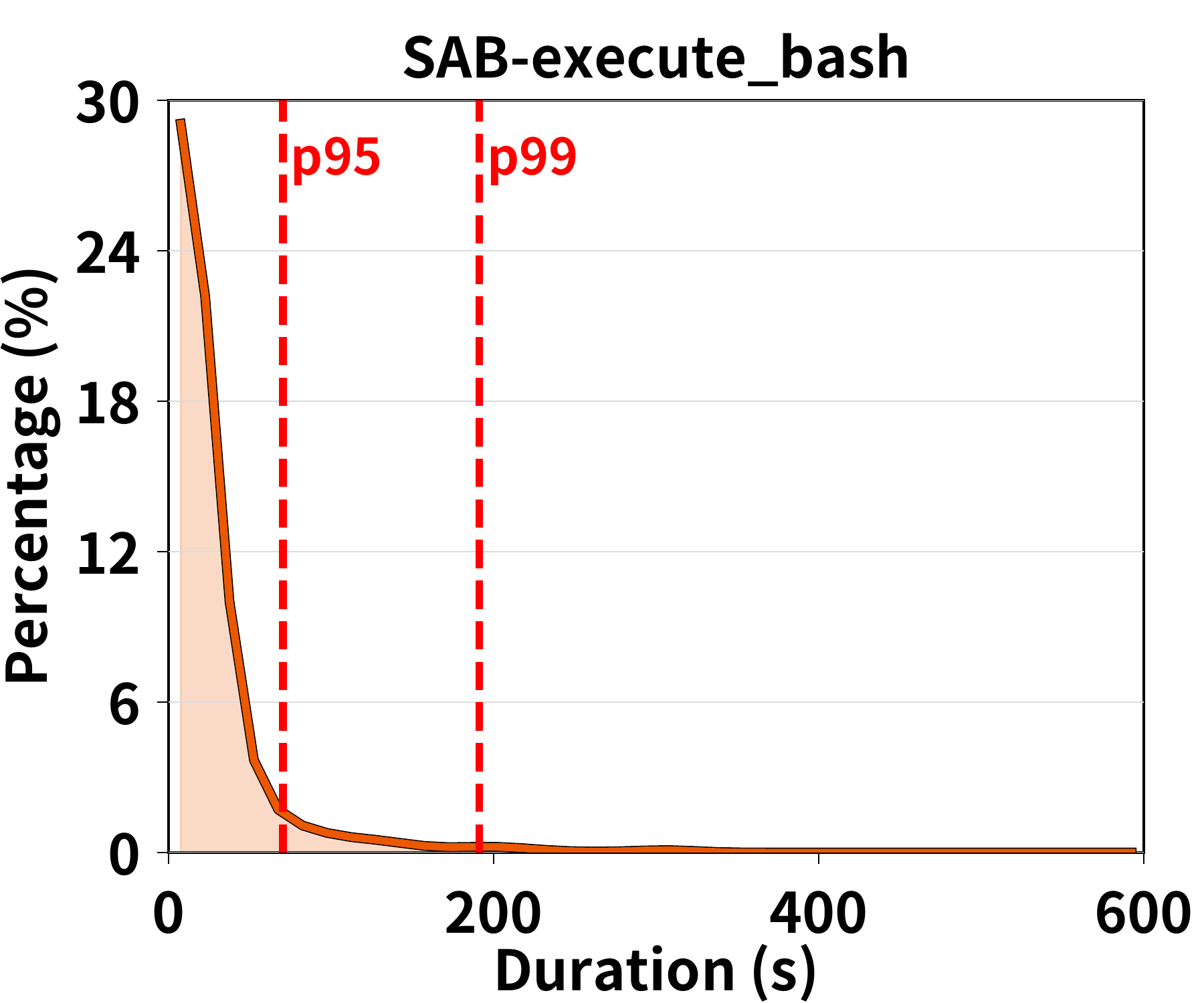} 
    \label{fig:sab_execute_bash}
  \end{subfigure}\hfill
  \begin{subfigure}[t]{0.32\textwidth}
    \centering
    \includegraphics[width=0.8\linewidth ]{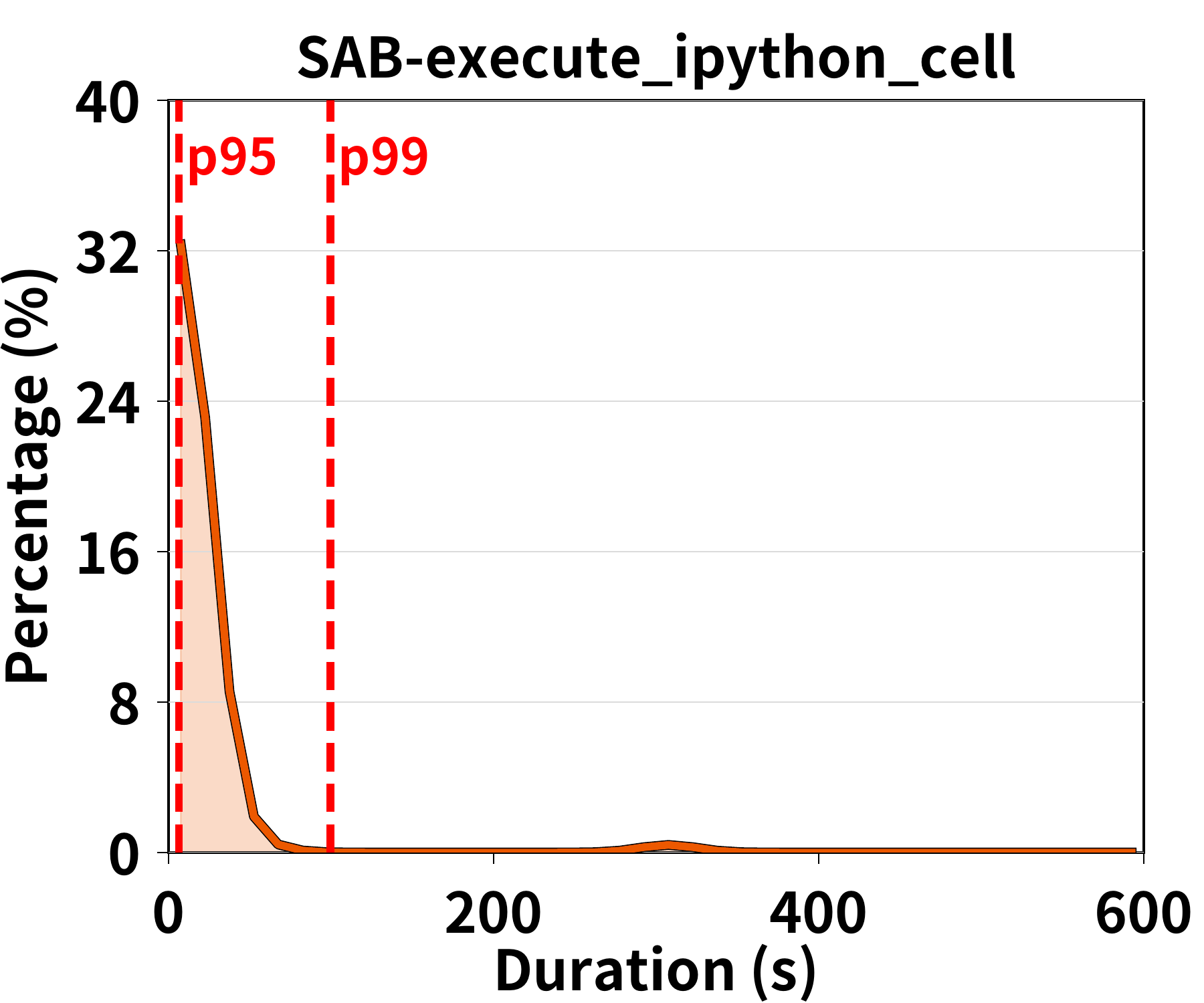} 
    \label{fig:sab_execute_ipython}
  \end{subfigure}\hfill
  \begin{subfigure}[t]{0.32\textwidth}
    \centering
    \includegraphics[width=0.8\linewidth ]{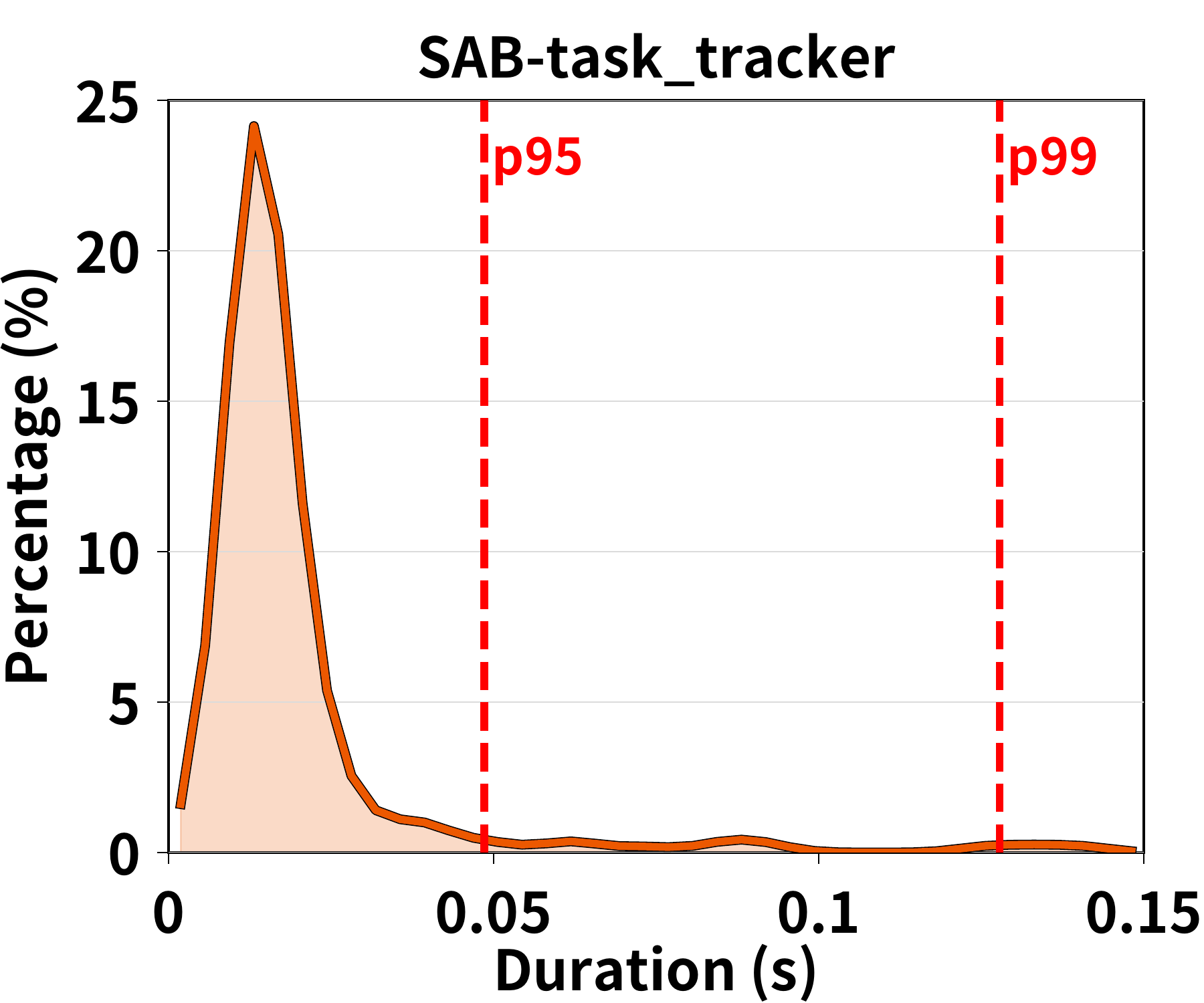}
    \label{fig:sab_task_tracker}
  \end{subfigure}

  \vspace{2mm}

  \begin{subfigure}[t]{0.32\textwidth}
    \centering
    \label{fig:empty_slot_left}
  \end{subfigure}\hfill
  \begin{subfigure}[t]{0.32\textwidth}
    \centering
    \includegraphics[width=0.8\linewidth]{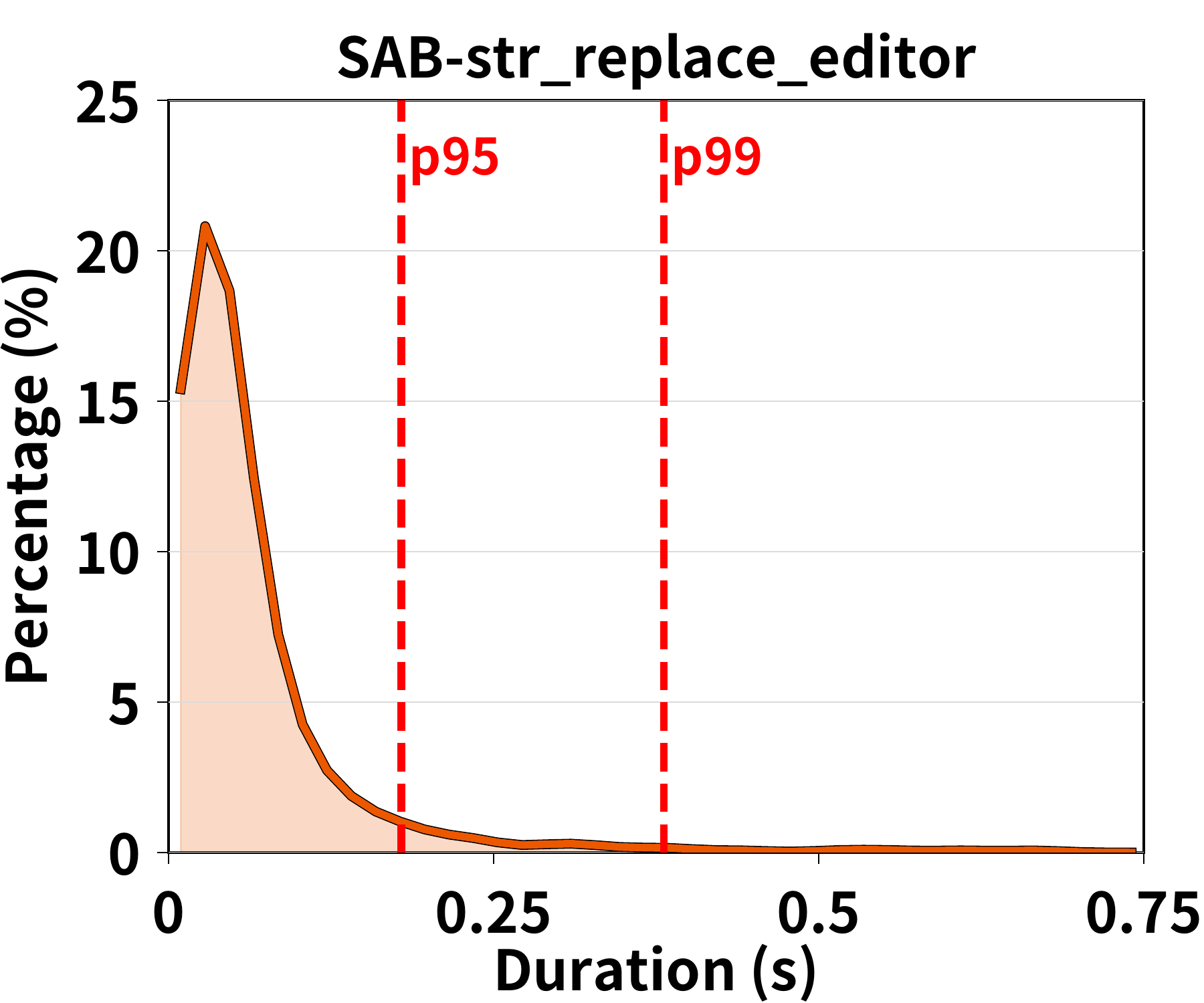} 
    \label{fig:sab_str_replace_editor}
  \end{subfigure}\hfill
  \begin{subfigure}[t]{0.32\textwidth}
    \centering
    \label{fig:empty_slot_right}
  \end{subfigure}
  
  \caption{\textbf{Tool execution time distributions.}Tool execution time exhibits high variability and is difficult to predict.} 
  
  \label{fig:tool_time}
\end{figure*}

\newpage
\section{KV cache hit rate statistics and interpretation}
\label{sec:kv_hit_rate_stats}

In our cost decomposition Equation~(\ref{eq:cost_decomp}), throughput loss in agentic serving mainly comes from \emph{non-productive} overheads:
KV re-computation induced by thrashing and idle KV caching during external tool execution, i.e.,
$\text{Cost}_{\text{recompute}}$ and $\text{Cost}_{\text{caching}}$. When tool calls are short and predictable, the acting phase occupies KV for only a short time, so $\text{Cost}_{\text{caching}}$ is small; thus, avoiding thrashing dominates: a higher KV cache hit rate typically implies fewer re-prefills and higher throughput.

However, when tool execution times are highly variable (see Appendix~\ref{app:tool_var}), a TTL-based scheduler can end up pinning the KV for long tool calls.
While this can reduce $\text{Cost}_{\text{recompute}}$ and thus increase the KV cache hit rate, it simultaneously inflates $\text{Cost}_{\text{caching}}$ and reduces throughput. This helps explain why continuum~\cite{li2025continuumefficientrobustmultiturn} can underperform on tool-heavy workloads despite achieving a higher KV cache hit rate (Figs.~\ref{fig:eval_infer_result},~\ref{fig:kv_hit_rate_data_statistics}).

\noindent\textbf{ThunderAgent adapts to these regimes by explicitly balancing caching and recomputation.}
ThunderAgent introduces a time-decay function $f(t)$ in Sec.~\ref{sec:scheduling_policy} for acting programs to trade off
$\text{Cost}_{\text{caching}}$ and $\text{Cost}_{\text{recompute}}$; we rigorously derive the optimal functional form of \(f(t)\) in Appendix~\ref{app:ft_discussion}.
By progressively lowering the effective memory priority of long-idle acting programs, the scheduler evicts their KV caches
to reduce idle caching cost while controlling recomputation, yielding better throughput in practice (Figs.~\ref{fig:eval_infer_result}).

\section{End-to-End Latency Analysis}
\label{app:e2e}
Though we have stated in \autoref{sec:intro} that program-level latency(time used for whole workflow generation) is far more important than end-to-end per step latency for autonomous agents and agentic RL rollout. Here we compare \ouralg's average per-step latency with vLLM and Continuum. \autoref{fig:e2e} shows that \ouralg significantly outperforms vLLM and Continuum when applying GLM4.6 and Qwen3 235B with mini-SWEAgent and Openhands on a single H100 serving in either low or high parallel workflow number. \textbf{The reason is that it seems to improve end-to-end latency by switching acting programs. But it actually delays all the running programs' latency by triggering heavy KV-cache thrashing.}

\begin{figure*}[htbp]
  \centering
  \includegraphics[width=0.7\textwidth]{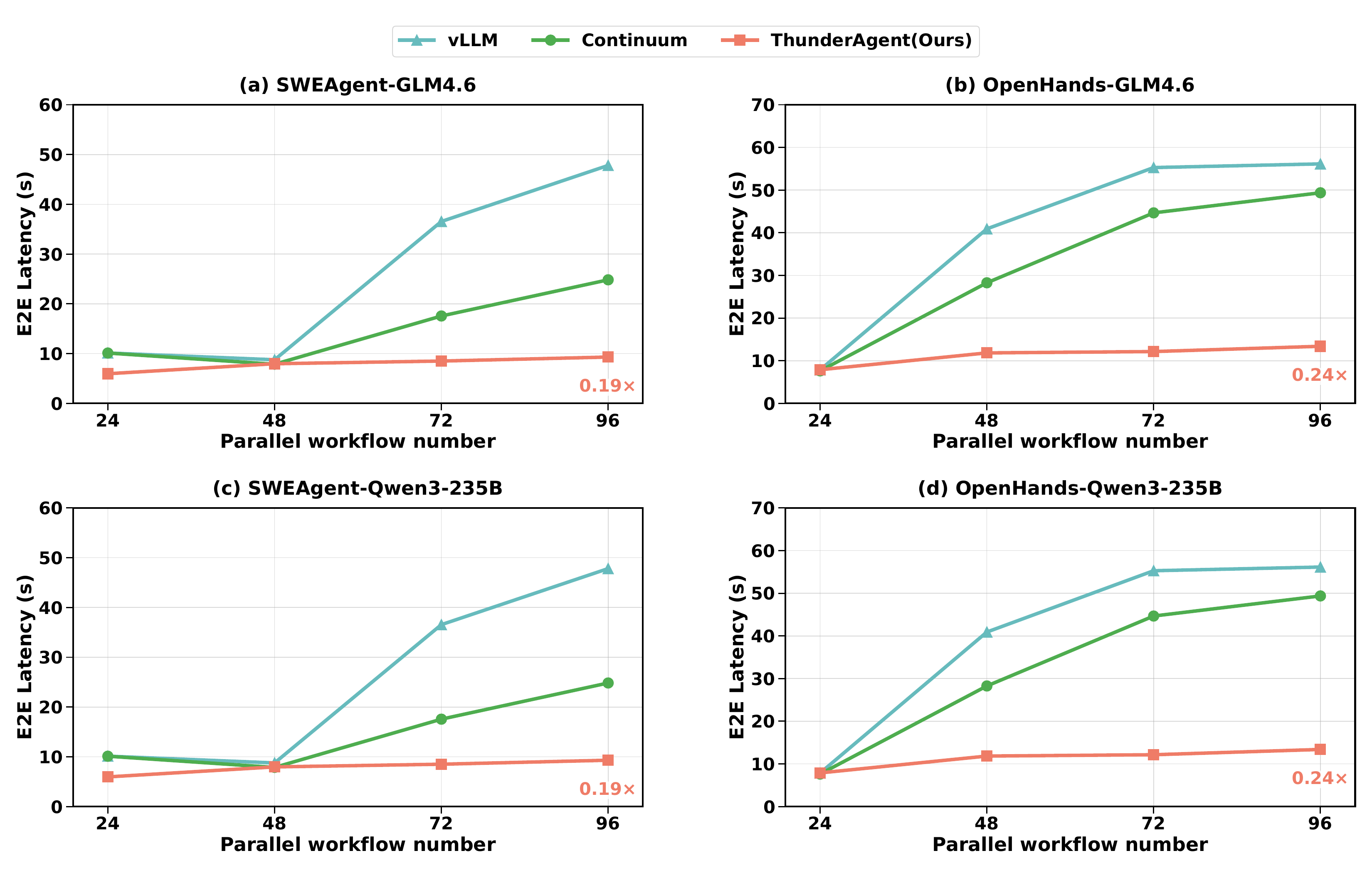}
  \caption{End-to-End latency comparision}
  \label{fig:e2e}
\end{figure*}

\newpage
\section{Extended Theoretical Analysis.}

\subsection{Proof of Time Decay Function for Periodic Thrashing Detection.}
\label{app:ft_discussion}

\begin{hypothesis}[Unpredictable Tool Execution Time]
\label{hyp:unpredictable_tool}
For acting programs, we hypothesize that the scheduler cannot reliably predict the tool return time for a given program (see Appendix~\ref{app:tool_var}). Consequently, the decay function $f$ should depend only on the elapsed acting time $t$ in a time-homogeneous manner~\cite{parzen1999stochastic}.
\end{hypothesis}

\begin{hypothesis}[Boundary Conditions] 
\label{hyp:boundary_conditions}
We assume the time decay function $f:[0,\infty)\rightarrow(0,1]$ satisfies
\begin{equation}
f(0) = 1, \lim_{t\to\infty} f(t) = 0,
\label{eq:boundary_f}
\end{equation}
An intuitive interpretation of these boundary conditions is that, when the tool execution time is $0$, corresponding to a multi-turn interaction without tool calls, all acting programs reduce to reasoning programs, and therefore $f(t)=1$. Conversely, if the tool execution time is infinite, the agentic workflow collapses to single-turn generation, akin to standard chatbot serving, since requests never return for the next-turn interactions. In this regime, setting $f(t)=0$ aligns the decay function with request-level scheduling policies.
\end{hypothesis}

\begin{theorem}[Admissible Time Decay Functions]
\label{thm:admissible_decay}
Under Hypothesis~\ref{hyp:unpredictable_tool} and \ref{hyp:boundary_conditions}, the admissible time decay function $f$ for our capacity check function in Equation~\ref{eq:time_decay} must take one of the following forms: exponential in continuous time, $f(t)=e^{-\lambda t}$ with $\lambda>0$, or geometric in discrete tick time, $f(k)=x^{-k}$ with $x>1$.
\end{theorem}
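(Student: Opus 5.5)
The plan is to turn Hypothesis~\ref{hyp:unpredictable_tool} into a functional equation for $f$ and solve it. Time-homogeneity together with the memoryless reading in Section~\ref{sec:4.3.1} says the discount applied to an acting program's tokens over an additional interval of length $s$ should not depend on how long the program has already been acting. The reason is that the scheduler has no information beyond elapsed time, and the remaining tool time is independent of the elapsed duration. Concretely, I will require that the relative weight $f(t+s)/f(t)$, which governs whether the program should be kept over the next $s$ time units, equals the weight $f(s)/f(0)$ assigned to a freshly acting program. Using $f(0)=1$ from Hypothesis~\ref{hyp:boundary_conditions}, this gives the Cauchy-type multiplicative equation
\begin{equation*}
f(t+s) = f(t)\,f(s), \qquad t,s \ge 0 .
\end{equation*}
An equivalent justification is that the decayed check in Equation~\ref{eq:time_decay} is evaluated periodically. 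Applying the decay over $[0,t]$ and then over $[t,t+s]$ must then agree with applying it once over $[0,t+s]$, so that the ranking of paused versus active programs is consistent across ticks.

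Next I solve the equation. Since $f$ takes values in $(0,1]$, I can set $g=-\log f$. This gives $g:[0,\infty)\to[0,\infty)$ with $g(t+s)=g(t)+g(s)$. Because $g\ge 0$ and additive, $g$ is monotone nondecreasing, since $g(t+s)\ge g(t)$. A monotone additive function on the half-line is linear, so $g(t)=\lambda t$. I will argue this by fixing $g$ on the rationals by additivity and squeezing irrational $t$ between rationals using monotonicity. This yields $f(t)=e^{-\lambda t}$. The boundary condition $\lim_{t\to\infty}f(t)=0$ rules out $\lambda=0$ and forces $\lambda>0$.

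For discrete tick time, with $t=k\,\Delta t$ and $k\in\mathbb{N}$, the same equation restricted to integers gives $f(k)=f(1)^k$ by induction, with no regularity needed. Writing $f(1)=x^{-1}$, the constraint $f(1)\in(0,1]$ together with the limit condition forces $x>1$, so $f(k)=x^{-k}$. This matches the configuration $f(t)=2^{-t}$ used in the experiments.

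The main obstacle is the first step: deriving the multiplicative equation rigorously from the informal Hypothesis~\ref{hyp:unpredictable_tool}. I would make it precise by modeling the tool return time $T$ as a random variable. I would then interpret $f(t)$ as the scheduler's retention weight, proportional to the probability that caching the program until time $t$ is worthwhile, namely $f(t)=\Pr[T>t]$ normalized so that $f(0)=1$. The memoryless property $\Pr[T>t+s\mid T>t]=\Pr[T>s]$ is then literally the functional equation. This interpretation is the modeling assumption the proof rests on, and I would state it explicitly. The remaining steps are standard. The one technical point to check is that positivity of $f$ is what licenses taking logarithms, and monotonicity is what excludes pathological additive solutions.
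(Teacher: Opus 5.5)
Your proposal is correct and follows essentially the same route as the paper. Both formalize Hypothesis~\ref{hyp:unpredictable_tool} as $f(t+\Delta)=f(t)\phi(\Delta)$ and use $f(0)=1$ to get the semigroup equation $f(t+\Delta)=f(t)f(\Delta)$. Both then take logarithms to reach Cauchy's equation, handle the discrete case by induction, and use $\lim f=0$ to force $\lambda>0$ (resp.\ $x>1$).

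The differences are cosmetic. The paper excludes pathological additive solutions by citing that $\ln f$ is bounded above. You instead derive monotonicity from $g\ge 0$ and squeeze through the rationals, which is the same mechanism spelled out explicitly. Your survival-function interpretation $f(t)=\Pr[T>t]$ is an optional modeling gloss that the paper's proof does not rely on.
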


\begin{proof}
We prove this theorem by first formalizing the time-homogeneous property implied by
Hypothesis~\ref{hyp:unpredictable_tool}. Next, we inducing the admissible time decay functions $f$ under the boundary conditions in
Hypothesis~\ref{hyp:boundary_conditions}.
\smallskip

\noindent\textbf{Formalization of unpredictable tool time.} Let $t$ denote the elapsed acting time, measured in wall-clock time (continuous time) or in periodic-monitor ticks (discrete time). Under Hypothesis~\ref{hyp:unpredictable_tool}, the relative decay after waiting an additional duration $\Delta$ should not depend on the absolute elapsed time $t$, but only on the increment $\Delta$. We formalize this as the existence of a function $\phi:[0,\infty)\to(0,1]$ such that, for all $t,\Delta\ge 0$, \begin{equation} f(t+\Delta)=f(t)\,\phi(\Delta). \label{eq:time_homogeneous} \end{equation}

\noindent\textbf{Semigroup equation.} Setting $t=0$ in Equation~\ref{eq:time_homogeneous} and using the boundary condition $f(0)=1$
(from Hypothesis~\ref{hyp:boundary_conditions}) yields
$\phi(\Delta)=f(\Delta)$. Substituting back, we obtain the multiplicative
semigroup equation
\begin{equation}
f(t+\Delta)=f(t)\,f(\Delta),\qquad \forall t,\Delta\ge 0.
\label{eq:semigroup}
\end{equation}

\noindent \textbf{Continuous-time case (exponential decay).} We first consider the continuous-time case.
Define $h(t)\triangleq \ln f(t)$. Applying the logarithms on both sides of
Equation~\ref{eq:semigroup} yields the \emph{Cauchy functional equation}
\begin{equation}
h(t+\Delta)=h(t)+h(\Delta).
\label{eq:cauchy}
\end{equation}
Since $f(t)\in(0,1]$, we have $h(t)\le 0$ for all $t\ge 0$, which implies
that $h$ is bounded above on $[0,\infty)$. Under this boundedness condition, the Cauchy functional equation admits only linear solutions
of the form $h(t)=ct$ for some $c\in\mathbb{R}$. Writing $\lambda\triangleq -c\ge 0$,
we obtain
\begin{equation}
f(t)=e^{-\lambda t}.
\label{eq:exp_decay}
\end{equation}
Finally, the boundary condition $\lim_{t\to\infty} f(t)=0$
(Hypothesis~\ref{hyp:boundary_conditions}) rules out $\lambda=0$,
and thus $\lambda>0$.

\smallskip
\noindent \textbf{Discrete-time case (geometric decay).} We next consider the discrete-time setting, where elapsed acting time is
measured in integer ticks $k\in\mathbb{Z}_{\ge 0}$.
Equation~\ref{eq:semigroup} becomes
\begin{equation}
f(m+n)=f(m)\,f(n),\qquad \forall m,n\in\mathbb{Z}_{\ge 0}.
\label{eq:semigroup_discrete}
\end{equation}
Setting $n=1$ yields the recurrence $f(k)=f(k-1)f(1)$. Let $\gamma \triangleq f(1)$, we have
$f(k)=f(1)^k\triangleq \gamma^k$.
The boundary condition $\lim_{k\to\infty} f(k)=0$ implies $0<\gamma<1$.
Equivalently, we can parameterize
\begin{equation}
f(k)=x^{-k},\qquad x\triangleq \gamma^{-1}>1.
\label{eq:geom_decay}
\end{equation}

\noindent This completes the proof.
\end{proof}

\subsection{Proof of recomputation STP cost}
\label{app:prefill_stpcost}
As defined in \autoref{sec:utility_model}, the STP recomputation cost is given by:
\begin{equation}
    \text{Cost}_{\text{recompute}} = \int_{0}^{t_{recompute}} c_i(t) \, dt 
\end{equation}
where $c_i(t)$ represents the instantaneous cost, which is proportional to the decoding step (i.e., $c_i(t) \propto t$). This proportionality arises because chunked prefill processes a constant number of KV pairs per iteration, resulting in a linear increase in accumulated computation over time. Consequently, evaluating the integral yields $\text{Cost}_{\text{recompute}} \propto t_{\text{recompute}}^2$. Given the relationship $t_{\text{recompute}} = c_i \times T_{\text{decode}} / \text{chunk}$, where both $T_{\text{decode}}$ and the chunk size are constant, it follows that:
\begin{equation*}
Cost_{\text{recompute}} \propto c_{i}^2
\end{equation*}

\subsection{Proof of minimized recomputation STP cost}

\label{app:_minimized_STP_cost}
We provide a rigorous proof for the optimality of the Shortest-First Eviction policy using an \textbf{exchange argument}.

\textbf{Problem Definition.} 
We aim to select a subset of paused programs $S$ to evict such that the total reclaimed memory satisfies $\sum_{i \in S} c_i \ge \Delta C$, while minimizing the total re-computation cost $J(S) = \sum_{i \in S} c_i^2$. Note that the cost function $f(x) = x^2$ is strictly convex and super-additive (i.e., $(a+b)^2 > a^2 + b^2$ for positive $a, b$).

\textbf{Theorem.} 
The optimal strategy to minimize $J(S)$ is to strictly select programs with the smallest context lengths $c_i$.

\textbf{Proof.} 
Suppose, for the sake of contradiction, that the optimal set $S^*$ is \textit{not} the set of the shortest programs. This implies there exists a "long" program $p_{long} \in S^*$ and a "short" program $p_{short} \notin S^*$ (available but not selected) such that $c_{short} < c_{long}$.

We can construct a new set $S'$ by swapping or decomposing $p_{long}$. Since $c_{long} > c_{short}$, we can conceptualize $p_{long}$ as being composed of a segment of length $c_{short}$ and a residue $r = c_{long} - c_{short}$. 

Replacing the selection of $p_{long}$ with $p_{short}$ (and theoretically the residue $r$) changes the cost. Consider the inequality derived from the convexity of the square function:
\begin{equation}
    c_{long}^2 = (c_{short} + r)^2 = c_{short}^2 + r^2 + 2 c_{short} r
\end{equation}
Since $c_{short} > 0$ and $r > 0$, the cross-term $2 c_{short} r > 0$. Therefore:
\begin{equation}
    c_{short}^2 + r^2 < c_{long}^2
\end{equation}

This inequality implies that breaking a large eviction target ($c_{long}$) into smaller components ($c_{short} + r$) strictly reduces the sum of squares. 
In the context of our scheduler, this means that if we are satisfying the memory constraint $\Delta C$ using a large program, we can strictly decrease the penalty by swapping it for available smaller programs (or a combination thereof) that sum to the same capacity. 

By iteratively applying this exchange—replacing the largest selected programs with smaller unselected programs—we monotonically decrease the cost function $J(S)$. The cost reaches its global minimum only when no such exchange is possible, i.e., when $S$ consists entirely of the programs with the smallest available context lengths.

\textbf{Conclusion.} 
The Shortest-First strategy is globally optimal because the super-linear cost of attention ($O(L^2)$) penalizes fragmentation less than aggregation.

\newpage

\section{Additional Ablation Studies}
\label{app:additional_ablations}
We provide additional ablation studies that isolate the contribution of \ouralg's individual design decisions, complementing the parameter-sensitivity ablation in \autoref{app:ablation_study}.

\subsection{Global vs.\ Local Waiting Queue}
\label{app:abl_global_queue}
We compare \ouralg's \textbf{global} waiting queue, which permits paused programs to be restored on a different DP node from their original backend, against a \textbf{local} waiting queue that always restores a program to its origin node. In the local variant, programs that pause on memory-saturated nodes must wait until their original node has capacity, whereas the global queue redistributes them to under-utilized peers.

As shown in \autoref{fig:abl_global_queue}, the global queue consistently improves throughput as we scale from 2 to 4 H100 nodes. At 4 nodes, the gain widens to $1.28\times$, reflecting that cross-node memory imbalance grows with cluster size (\autoref{fig:dp unbalance}); the global queue directly mitigates this imbalance.

\begin{figure}[!htbp]
\centering
\includegraphics[width=0.6\linewidth]{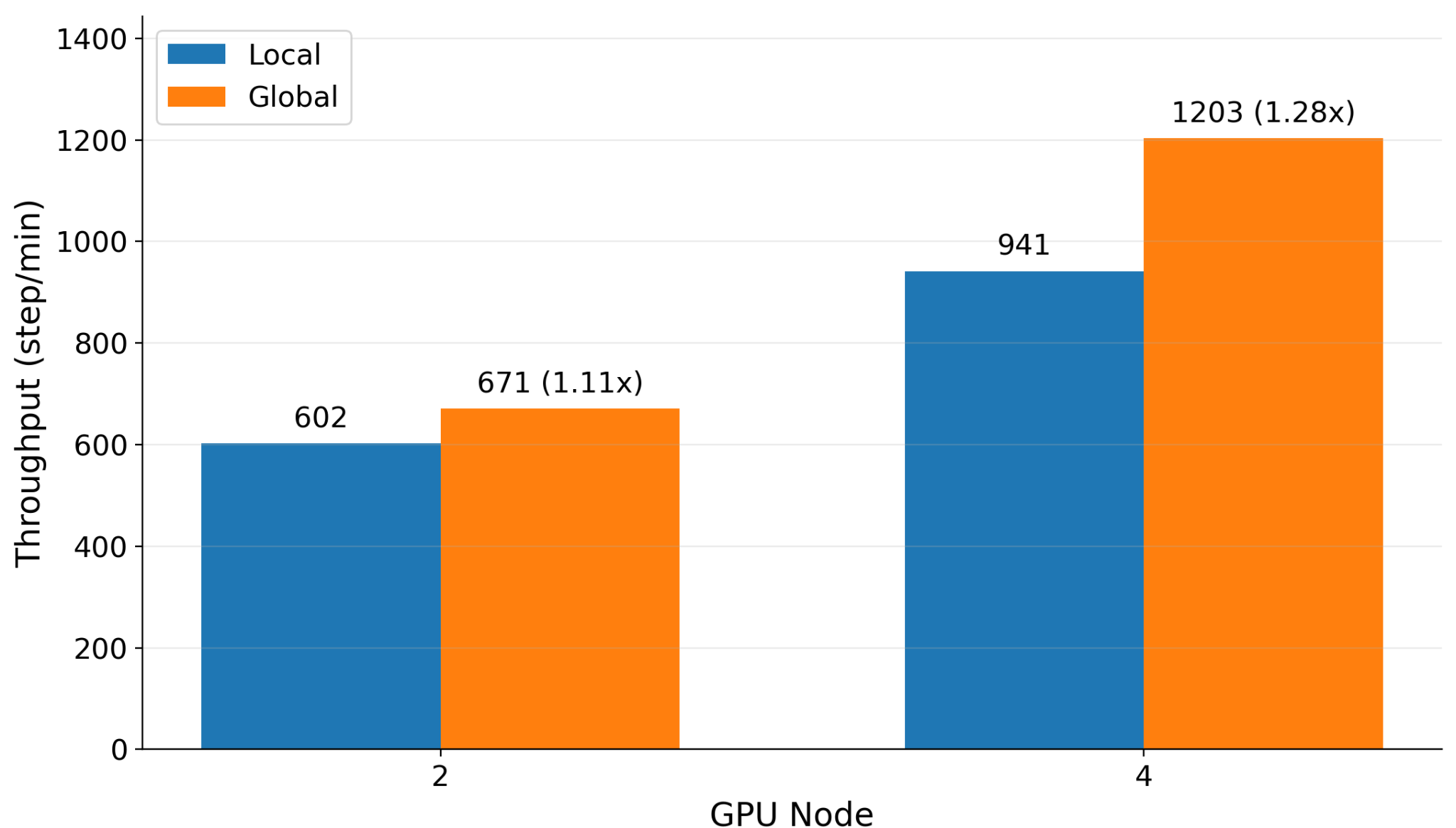}
\caption{\textbf{Global vs.\ local waiting queue} on GLM-4.5-fp8 + mini-SWEAgent (8xH100 GPU, $72$ concurrent programs per node).}
\label{fig:abl_global_queue}
\end{figure}

\subsection{Time-Decay Function Form}
\label{app:abl_decay}
\autoref{thm:admissible_decay} establishes that exponential decay is the unique admissible form of $f(t)$ under the unpredictable-tool-latency assumption and the boundary conditions in Hypothesis~\ref{hyp:boundary_conditions}; we now empirically compare it against two practical alternatives: \textbf{constant} ($f(t)=1$, i.e., no decay), \textbf{linear} ($f(t)=\max(1-t,0)$), and \textbf{exponential} ($f(t)=e^{-t}$). We evaluate on three agentic pipelines with progressively less predictable tool latency: mini-SWE-Agent (short, regular tool calls), OSWorld (longer but largely periodic tool calls, e.g., a fixed $5$s interval between screenshots), and ScienceAgent (highly stochastic, heavy-tailed tool calls; see \autoref{app:tool_var}).

As shown in \autoref{fig:abl_decay}, exponential decay is the most robust choice overall: it outperforms both alternatives on mini-SWE-Agent ($1.08\times$) and ScienceAgent ($1.63\times$ vs.\ $1.39\times$ for linear), where unpredictable or long-tailed tool latency exposes the brittleness of a static $f(t)$. On OSWorld, however, exponential decay trails linear by about $6\%$ ($1.14\times$ vs.\ $1.21\times$): when tool durations are themselves near-deterministic, a linearly shrinking weight tracks the predictable tool clock more tightly. Constant decay, which never reclaims memory from long-idle acting programs, is the weakest on every workload that exhibits any waiting at all.

\newpage

\subsection{Eviction Policy}
\label{app:abl_eviction}
We empirically validate the shortest-first eviction policy proved optimal in \autoref{app:_minimized_STP_cost} against two natural baselines: \textbf{random} eviction and \textbf{longest-first} eviction. We run on two agentic pipelines, mini-SWE-Agent (\textit{SWE}) and OpenHands, both with GLM-4.5-fp8 on a single H100 node at $72$ concurrent programs.

As predicted by the $O(L^2)$ recomputation cost (\autoref{app:prefill_stpcost}), evicting the longest paused programs is the worst choice: the dropped KV state is the most expensive to rebuild on resume, inflating average prefill time and starving the decoding stream. Shortest-first achieves the highest throughput and the lowest average prefill time on both pipelines (\autoref{fig:abl_eviction}). The gap is largest on OpenHands, where program contexts are heaviest -- shortest-first is roughly $2.0\times$ the throughput of longest-first and cuts average prefill time by $2\times$ ($7.5$~s vs.\ $15.1$~s). Random falls between the two, consistent with it being an unbiased estimator over the same cost surface.

\begin{figure}[!htbp]
\centering
\includegraphics[width=0.55\linewidth]{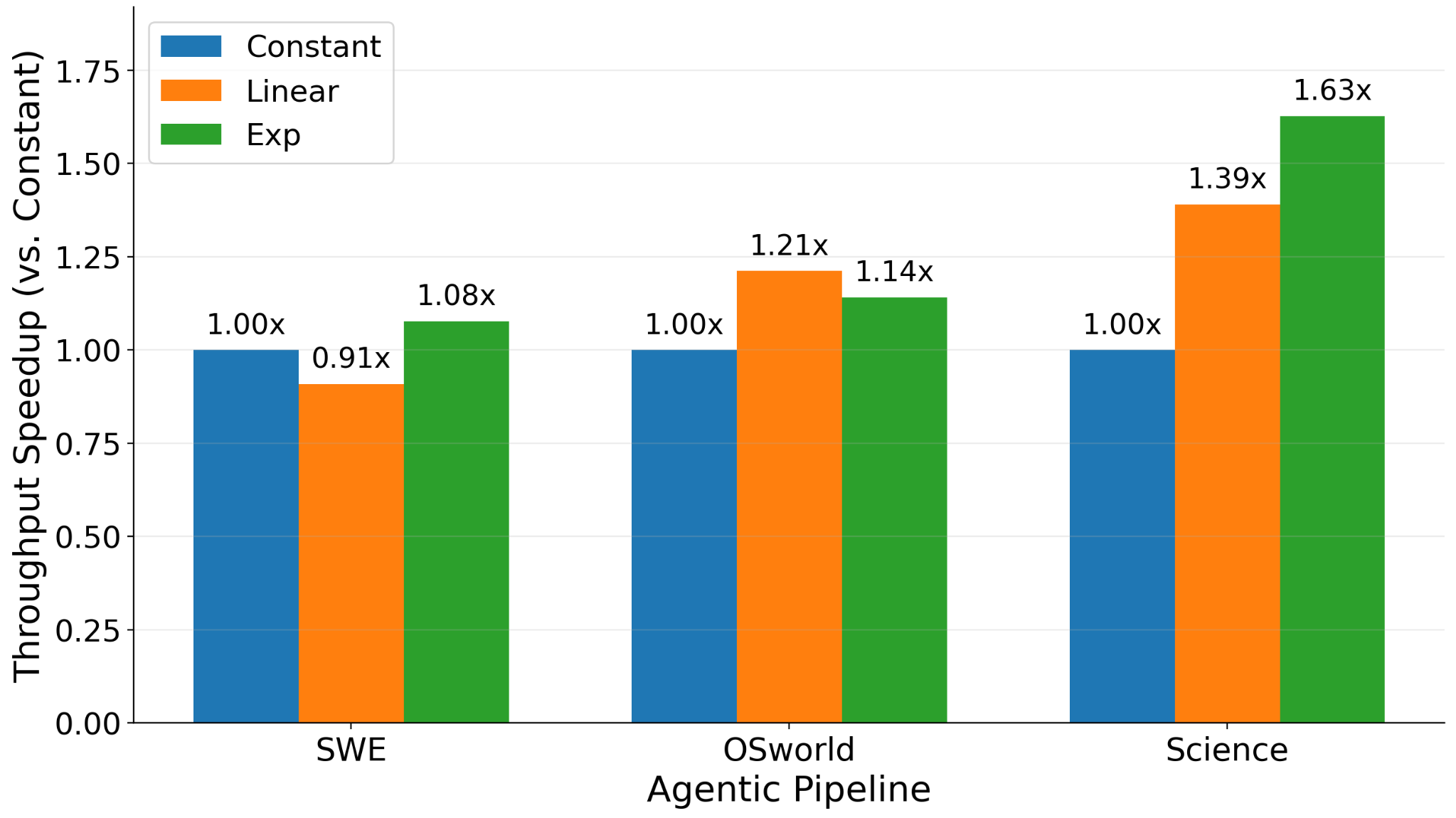}
\caption{\textbf{Ablation of the decay function} $f(t)$ across three agentic pipelines. Throughput is reported as speedup vs.\ the constant (no decay) baseline. Exponential decay is the most robust overall; linear decay is preferable only on OSWorld, where tool execution time is near-deterministic. Experiments are conducted on an 8xH100 node with 72 concurrent programs using GLM-4.5-fp8 model.}
\label{fig:abl_decay}
\end{figure}

\begin{figure}[!htbp]
\centering
\includegraphics[width=0.95\linewidth]{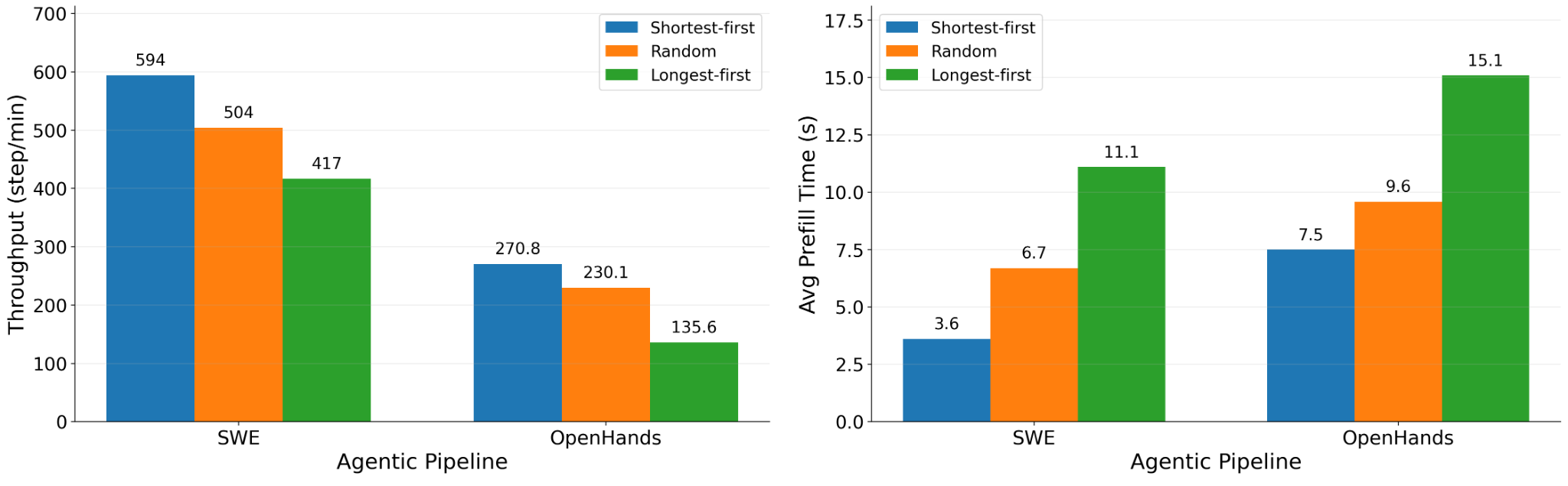}
\caption{\textbf{Ablation of the eviction policy}. (Left) throughput, (right) average prefill time. Shortest-first eviction wins on both axes, with the gap widening on the heavier-context OpenHands pipeline. Experiments are conducted on an 8xH100 node with 72 concurrent programs using GLM-4.5-fp8 model.}
\label{fig:abl_eviction}
\end{figure}

\newpage

\subsection{Component-wise Contribution}
\label{app:abl_components}
We disentangle the contribution of \ouralg's two main scheduling components by starting from vanilla vLLM and incrementally enabling \textbf{local program-aware scheduling} (the cost-driven eviction + decay function operating per backend) and then the \textbf{global waiting queue} (cross-node program migration). \autoref{tab:abl_components} reports throughput on $2{\times}$H100 nodes with GLM-4.5-fp8 + mini-SWEAgent at $144$ concurrent programs.

The largest jump comes from local program-aware scheduling, which raises throughput from $375$ to $602$ steps/min ($1.61\times$) by directly suppressing unnecessary recomputation and idle caching cost (\autoref{eq:cost_decomp}) within each backend. Layering the global waiting queue on top adds a further $1.12\times$ ($602\to 672$) by reducing cross-node memory imbalance, consistent with the standalone global-vs-local result in \autoref{app:abl_global_queue}.

\begin{table}[!htbp]
\centering
\small
\setlength{\tabcolsep}{12pt}
\renewcommand{\arraystretch}{1.1}
\begin{tabular}{lc}
\toprule
\textbf{Component} & \textbf{Throughput (steps/min)} \\
\midrule
vLLM (baseline)                                       & 375 \\
\quad + local program-aware scheduling                & 602 \\
\quad\quad + global waiting queue (\ouralg)     & 672 \\
\bottomrule
\end{tabular}
\vspace{4pt}
\caption{Component-wise ablation on $2{\times}$H100 + GLM-4.5-fp8 + mini-SWEAgent at $144$ concurrent programs. Speedups are reported against vLLM. Local scheduling delivers the majority of the gain; the global queue is responsible for the remaining cross-node imbalance reduction.}
\label{tab:abl_components}
\end{table}




\end{document}